\newcommand{\cmark}{\ding{51}}
\newcommand{\xmark}{\ding{55}}
\pgfplotsset{width=8cm,compat=1.9}
\newcommand{\algabbrev}{Alg}
\newcommand{\algname}{{\algabbrev}.}
\newcommand{\algsname}{{\algabbrev}s.}
\newcommand{\Algname}{Algorithm}
\newcommand{\Algsname}{{\Algname}s}
\crefname{algorithm}{\algname}{\algsname}
\Crefname{algorithm}{\Algname}{\Algsname}
\crefname{proposition}{Prop.}{Props.}
\Crefname{proposition}{Proposition}{Propositions}
\crefname{remark}{Rem.}{Rems.}
\Crefname{remark}{Remark}{Remarks}
\crefname{functionality}{Func.}{Funcs.}
\Crefname{functionality}{Functionality}{Functionalities}
\crefname{protocol}{Protocol}{Protocols}
\crefname{definition}{Def.}{Defs.}
\newcommand\xleftrightarrow[2][]{%
  \ext@arrow 9999{\longleftrightarrowfill@}{#1}{#2}}
\newcommand\longleftrightarrowfill@{%
  \arrowfill@\leftarrow\relbar\rightarrow}
\newcommand{\Setup}{\textbf{Setup}\xspace}
\newcommand{\Encode}{\textbf{Encode}\xspace}
\newcommand{\Reduce}{\textbf{Reduce}\xspace}
\newcommand{\Map}{\textbf{Map}\xspace}
\newcommand{\Union}{\textbf{Union}\xspace}
\begin{document}
\title{Optimal Communication Unbalanced Private Set Union}

\author{Jean-Guillaume Dumas\inst{1} \and
 Alexis Galan\inst{1} \and
 Bruno Grenet\inst{1} \and Aude Maignan\inst{1} \and Daniel S. Roche\inst{2}}
 %
 %
 \institute{Univ. Grenoble Alpes, LJK, UMR CNRS 5224, 38000 Grenoble, France \email{\{firstname.lastname\}@univ-grenoble-alpes.fr}
 \and
 United States Naval Academy, Annapolis, Maryland, United States \email{roche@usna.edu}\\
 }

\maketitle              
\begin{abstract}
We present new two-party protocols for the Unbalanced Private Set Union (UPSU) problem.
Here, the Sender holds a set of data points, and the Receiver holds another (possibly much larger) set, and they would like for the Receiver to learn the union of the two sets and nothing else. Furthermore, the Sender's computational cost, along with the communication complexity, should be smaller when the Sender has a smaller set.
While the UPSU problem has numerous applications and has seen considerable recent attention in the literature, our protocols are the first where the Sender's computational cost and communication volume are linear in the size of the Sender's set only, and do not depend on the size of the Receiver's set.
Our constructions combine linearly homomorphic encryption (LHE) with
fully homomorphic encryption (FHE). The first construction uses multi-point polynomial evaluation (MEv) on FHE, and achieves optimal linear cost for the Sender, but has higher quadratic computational cost for the Receiver. In the second construction we explore another trade-off: the Receiver computes fast polynomial Euclidean remainder in FHE while the Sender computes a fast MEv, in LHE only. This reduces the Receiver's cost to quasi-linear, with a modest increase in the computational cost for the Sender.
Preliminary experimental results using HElib indicate that, for example, a Sender holding 1000 elements can complete our first protocol using about 2s of computation time and less than 9MB of communication volume, independently of the Receiver's set size.

\end{abstract}

\section{Introduction}

A private set union (PSU) protocol is a cryptographic protocol
involving two parties. The receiver, denoted $\mathcal{R}$, owns a set
$\textbf{X}$, and the sender, denoted $\mathcal{S}$, owns a set
\textbf{Y}. The desired functionality of such a protocol
is that the receiver
$\mathcal{R}$ receives only the union $\textbf{X}\cup \textbf{Y}$, and
the sender $\mathcal{S}$ learns nothing.
(Note that this is equivalent to the receiver learning the set difference
$\textbf{Y} \setminus \textbf{X}$.)
The
protocol is parameterized with (upper bounds on) the set sizes
$|\textbf{X}|$ and $|\textbf{Y}|$, which are therefore implicitly
revealed to both parties as well. However, the sender $\mathcal{S}$
learns nothing about the content of \textbf{X} and the receiver
$\mathcal{R}$ learns nothing about
$\textbf{X}\cap\textbf{Y}$.
%
%
%

PSU protocols have been widely studied in the case of {\em balanced} input set size~\cite{BS05,KS05,DBLP:conf/acns/Frikken07,DBLP:conf/acisp/DavidsonC17,DBLP:conf/asiacrypt/KolesnikovRT019,DBLP:conf/pkc/GarimellaMRSS21,Jia+22,Zhang:2023:Usenix:LPSU}, motivated by numerous practical applications such as
disease data collection from hospitals.

Our interest lies in the {\em unbalanced} setting, in particular where
the sender's input set is (quite) smaller than the receiver's. This
setting has received less attention, but we note two recent works on UPSU
\cite{Tu:2023:CCS:UPSU,DBLP:journals/iacr/ZhangCLPHWW24}
which were developed independently from ours.
The motivation here is data
\emph{aggregation} from multiple sources which may have different sizes
and computational capabilities, and where the set intersections may
reveal private relationships, such as IP blacklists \cite{BLAG}.

As an illustrative example, consider a whistleblower
which has some
confidential and compromising data they would like to share with some
institution.
If the institution
obtains some data that was already in its set, it could leak some
relationship between different whistleblowers and compromise their
anonymity.
Similarly, if the whistleblower learns anything, this could tell them
about the presence (or not) of other previous whistleblowers.
Furthermore, it is expected that one whistleblower's amount of data as well as their
computational resources are smaller than the institution's, so their
computational cost and communication size should be as small as
possible; hence UPSU.

If $m$ and $n$ denote the respective sizes of the sets owned by the
sender $\mathcal{S}$ and the receiver $\mathcal{R}$, we thus assume $n
\ge m$ and focus on the case where $n \gg m$. The communication
volume, that is, the least number of elements exchanged during the
protocol, must be $\Omega(m)$ asymptotically, since in the worst case, the
whole set of the sender must be sent to the receiver.
(This also explains why the other unbalanced case of $m \ge n$ is not
promising for improvements.)
Also, the optimal arithmetic cost, that is the number of arithmetic
operations performed by each party, must be at least linear in the
size of their set.


\paragraph{Related Work.}
In~\cite{DBLP:conf/acns/Frikken07} and
in~\cite{DBLP:conf/acisp/DavidsonC17}, the receiver's set is
represented respectively with a polynomial and with a Bloom
filter~\cite{DBLP:journals/cacm/Bloom70} (evaluating to zero
exactly in its element's set). Note that for the Bloom filter case,
there is a probability of false-positive, where an element that is not
in the receiver's set evaluates to zero, that has to be controlled
with a supplementary statistical security parameter. The
representation is sent encrypted under a linearly homomorphic
encryption (LHE)~\cite{DBLP:series/sbcs/YiPB14}, and evaluated
homomorphically by the sender in all its element's set. Upon reception
and decryption of the evaluations, the receiver obtains either zero
and learns nothing, or a non-zero value from which it computes the
evaluation point.

In~\cite{Zhang:2023:Usenix:LPSU}, the receiver represents its set
$\textbf{X}$ as a database called an oblivious key-value store
(OKVS)~\cite{DBLP:conf/crypto/GarimellaPRTY21} in which the elements
of $\textbf{X}$ are viewed as keys, all associated to encryptions of a
same secret value. Upon reception, the sender queries the OKVS using
its elements as keys. By design, the sender gets encryptions of the
secret value for elements in the intersection and random values for
elements not in the intersection. Through a sub-protocol, the parties
obliviously compare the sender's values with the receiver's secret,
and the receiver obtains a bit vector where the ones indicate a
match. Finally, the parties perform an oblivious
transfer~\cite{DBLP:conf/latincrypt/ChouO15} in order for the receiver
to learn exactly the elements of $\textbf{Y}$ that are not in
$\textbf{X}$.

This protocol has been recently translated into two versions of an
UPSU~\cite{DBLP:journals/iacr/ZhangCLPHWW24}. Both versions use the
fact that an OKVS can be structured into a sparse and a dense
parts. The sparse part is decodable with low arithmetic cost and
communication, and only the small dense part has to be
communicated. For efficiency, the first version requires both parties
to store their sets into hash tables, with cuckoo hashing for the
sender, in order to reduce the union protocol between sets of sizes
$n$ and $m$ to about $m$ union protocols between sets of sizes
approximately $n/m$ (partitioning).
The second version uses a re-randomizable public-key encryption
(ReRand-PKE) to skip the sub-protocol step
from~\cite{Zhang:2023:Usenix:LPSU}, which compares the secret and the
decoded values.

In~\cite{Tu:2023:CCS:UPSU}, the global idea is to evaluate a
polynomial representing the receiver's set in the sender's elements as
in~\cite{DBLP:conf/acns/Frikken07}, but unlike in the latter, the
receiver performs the evaluation. Indeed in an unbalanced context, the
large polynomial of the receiver cannot be communicated. To do so, the
sender's elements are sent encrypted under FHE, since polynomial
evaluation requires both additions and multiplications. Following the
ideas from a PSI protocol~\cite{DBLP:conf/ccs/ChenHLR18} to reduce the
multiplicative depth, the protocol uses batching, windowing, oblivious
transfer and partitioning with hash tables. However, as mentioned
in~\cite{DBLP:conf/asiacrypt/KolesnikovRT019} and as we show
in~\cref{sec:leak}, the usage of hash tables to partition the sets,
used in~\cite{Tu:2023:CCS:UPSU} and the first protocol
of~\cite{DBLP:journals/iacr/ZhangCLPHWW24}, undesirably leaks some
information on the intersection set $\textbf{X}\cap\textbf{Y}$.

A comparison of the asymptotic complexity bounds of all these protocols and
ours are detailed in~\cref{tab:comparison}. 
We compare the arithmetic costs for each party and the
communication volume, but we also distinguish the protocols through
their number of rounds, their compatibility with the security
assumptions and, for those that are using fully homomorphic encryption
(FHE), as ours, the multiplicative depth of their algorithms, as it
has a huge impact on the practical performance. A green value
satisfies our goals, while an orange is expected to be improved and a
red is not appropriate.

\begin{table*}[h]
\caption{Comparison of (U)PSU protocols, where the receiver $\mathcal{R}$ has a set of size $n$ and the sender $\mathcal{S}$ a set of size $m$, with $n>{m}$ }\label{tab:comparison}
\centering
\begin{tabular}{|c||c|c|c|c|c|c|}
\hline
\textbf{Protocol}&Cost for $\mathcal{R}$& Cost for $\mathcal{S}$& Comm. Vol. & $\#$ rounds & Depth &Security\\
\hline
\cite{DBLP:conf/acns/Frikken07}&$O(n^{1+\epsilon})$&\textcolor{red}{$O(nm)$}&\textcolor{red}{$O(n)$}&2&&\textcolor{darkgreen}{\cmark} \\
\hline
\cite{DBLP:conf/acisp/DavidsonC17}&$O(n)$&\textcolor{orange}{$O(m\log n)$}&\textcolor{red}{$O(n)$}&2&&\textcolor{darkgreen}{\cmark}\\
\hline
\cite{Zhang:2023:Usenix:LPSU}&$O(n)$&\textcolor{orange}{$O(m\log n)$}&\textcolor{red}{$O(n)$}&$\geq 1$+OT&&\textcolor{darkgreen}{\cmark}\\
\hline
\cite{Tu:2023:CCS:UPSU}&$O(n)$&\textcolor{orange}{$O(m\log n)$}&\textcolor{orange}{$O(m\log n)$}&$\geq 4$+OT&\textcolor{darkgreen}{$\leq \log \log (n/m)$}&\textcolor{red}{\xmark}\\
\hline
\cite[$\mathsf{PSU}_{\mathsf{op}}$]{DBLP:journals/iacr/ZhangCLPHWW24}&$O(n)$&\textcolor{orange}{$O(m\log n)$}&\textcolor{orange}{$O(m\log n)$}&$\geq 6$+OT& &\textcolor{red}{\xmark}\\
\hline
\cite[$\mathsf{PSU}_{\mathsf{pk}}$]{DBLP:journals/iacr/ZhangCLPHWW24}&$O(n)$&\textcolor{orange}{$O(m\log n)$}&\textcolor{orange}{$O(m\log n)$}&$\geq 2$+OT&&\textcolor{darkgreen}{\cmark}\\
\hline
\cref{pro:para/batch}&$O(mn)$&\textcolor{darkgreen}{$O(m)$}&\textcolor{darkgreen}{$O(m)$}&3&\textcolor{orange}{$\log n+1$}&\textcolor{darkgreen}{\cmark}\\
\hline
\cref{pro:modF}&$O(n^{1+\epsilon})$&\textcolor{darkgreen}{$O(m^{1+\epsilon})$}&\textcolor{darkgreen}{$O(m)$}&3&\textcolor{orange}{$2\log (n/m)+1$}&\textcolor{darkgreen}{\cmark}\\
\hline
\end{tabular}
\end{table*}

\paragraph{Our contributions.}
We present two new UPSU protocols (the second one with a variant),
proven secure under the honest-but-curious adversary model. 

All our
protocols have a communication volume linear in the size of the
sender's set and independent of the size of the receiver's set.

We also have an optimal number of rounds and a sender's arithmetic cost
which is independent of the receiver's set size.

Our protocols combine two encryption schemes, namely a
linearly homomorphic one and a fully homomorphic one.

We then need to introduce several homomorphic algorithms on
polynomials, which can be of independent interest.
Our main tool is to use a polynomial representing the receiver's set
(its roots are the receiver's elements) and evaluate it
homomorphically in each of the sender's elements, but without
communicating the whole polynomial. 

We optimize here a trade-off
between communication volume and FHE multiplicative depth.
For security and correctness purposes, we need the FHE scheme to
allow exact polynomial evaluation, so we need the plaintext space to
be an integral subdomain of the rationals. 

The BGV
cryptosystem~\cite{DBLP:journals/toct/BrakerskiGV14} can for instance
satisfy this condition as its plaintext space is a finite field.
\begin{itemize}
\item Our first protocol is built on fully homorphic multi-point evaluation,
  with homomorphic scalar products, low multiplicative depth and large
  parallelism.
  We use a few FHE optimizations, namely batching and
  modulus switching, in order to reduce the time computation and the
  communication volume. We propose an implementation of this protocol
  with low communication, using the
  HElib\footnote{\url{https://github.com/homenc/HElib}} instantiation
  of BGV.
\item Our second protocol relies on efficient fully homomorphic
  Euclidean remainder and only linearly homomorphic multi-point evaluation.
  This drastically reduces the arithmetic cost for the
  receiver, but increases the multiplicative depth and the sender's
  cost. Some other trade-offs can be considered and, as a variant, we
  for instance consider a third protocol, with a slightly worse depth,
  but a better sender's cost.
\end{itemize}

\paragraph{Outline.}
In~\cref{sec1}, we introduce the adversary model and the formal
security definitions. We then propose in~\cref{sec:leak} a privacy
attack on the partitioned constructions with hash tables
of~\cite{Tu:2023:CCS:UPSU,DBLP:journals/iacr/ZhangCLPHWW24}.
\cref{sec:blocks} contains the linear and fully homomorphic
encryption schemes formalization with some practical aspects.
From this, in~\cref{sec:UPSU}, we present our optimal communication
protocol, using homomorphic batched scalar multi-point evaluation. We
also present an implementation with computational timings and
communication volume. Finally, in~\cref{sec:variants} we present a
protocol (and a variant), using efficient homomorphic algorithms on
polynomials, that improve on the asymptotic complexity bounds.
\section{Adversary Model and UPSU Security Definition}\label{sec1}
\subsubsection{Honest-But-Curious Adversaries.}
Our protocols are secure under the honest-but-curious adversary model,
where the participants must follow the protocol but try to learn as
much additional information as possible. The security proofs of our
protocols, presented in~\cref{app:secuUPSU}, are by simulation,
following the framework of~\cite{DBLP:books/sp/17/Lindell17},
where a probabilistic polynomial time (PPT) simulator can generate
computationally indistinguishable
transcripts~\cite{DBLP:books/cu/Goldreich2001}.
\subsubsection{Unbalanced Private Set Union Definition.}
We propose a formal definition of an UPSU protocol divided into five algorithms, namely \Setup, \Encode, \Reduce, \Map and \Union. For a sender $\mathcal{S}$ that owns a set $\textbf{Y}$ and a receiver $\mathcal{R}$ that owns a set $\textbf{X}$:
\begin{itemize}
\item $\lbrace keys_\mathcal{R},keys_\mathcal{S}\rbrace\leftarrow\Setup(\kappa,\lambda)$: On input of a computational security parameter $\kappa$ and optionally a statistical security parameter $\lambda$, set up a context (encryption schemes, hash functions, ...) with respect to $\kappa$ and $\lambda$, and  outputs $keys_\mathcal{R}$ to the receiver and $keys_\mathcal{S}$ to the sender;
\item $E_\textbf{Y}\leftarrow\Encode(\textbf{Y},keys_\mathcal{S})$: Given the set \textbf{Y} and $keys_\mathcal{S}$, outputs $E_\textbf{Y}$ to the receiver, an encoding of the set \textbf{Y};
\item $R_{\textbf{X}|E_\textbf{Y}}\leftarrow\Reduce(\textbf{X},E_\textbf{Y},keys_\mathcal{R})$: As input, takes the set \textbf{X}, $keys_\mathcal{R}$ and $E_\textbf{Y}$. Outputs $R_{\textbf{X}|E_\textbf{Y}}$ to the sender, an encoding of the set \textbf{X}, reduced in size depending on $E_\textbf{Y}$;
\item $M_{Y|R_{\textbf{X}}}\leftarrow\Map(\textbf{Y},R_{\textbf{X}|E_\textbf{Y}},keys_\mathcal{S})$: On input of a set $\textbf{Y}$, an encoding $R_{\textbf{X}|E_\textbf{Y}}$ and $keys_\mathcal{S}$, outputs to the receiver an encoded data set $M_{Y|R_{\textbf{X}}}$ representing the set $\textbf{Y}\setminus\textbf{X}$, depending on $\textbf{Y}$ and $R_{\textbf{X}|E_\textbf{Y}}$;
\item $\textbf{Z}\leftarrow\Union(\textbf{X},M_{Y|R_{\textbf{X}}},keys_\mathcal{R})$: On input of the set \textbf{X}, an encoded data set $M_{Y|R_{\textbf{X}}}$ and $keys_\mathcal{R}$, outputs the union set $\textbf{Z}=\textbf{X}\cup\textbf{Y}$ to the receiver.
\end{itemize}
\begin{definition}
(\Setup, \Encode, \Reduce, \Map, \Union) is an unbalanced private set union scheme if it satisfies the following three properties:
\begin{enumerate}
\item \textbf{Correctness.} For security parameters $\kappa$ and $\lambda$ and any sets $\textbf{X},\textbf{Y}$, for
\begin{align*}
\lbrace keys_\mathcal{R}, keys_\mathcal{S}\rbrace&\leftarrow\Setup(\kappa,\lambda),\\
E_\textbf{Y}&\leftarrow\Encode(\textbf{Y},keys_\mathcal{S}),\\
R_{\textbf{X}|E_\textbf{Y}}&\leftarrow\Reduce(\textbf{X},E_\textbf{Y},keys_\mathcal{R}),
\end{align*}
the scheme is correct if
\begin{equation}
\Union(\textbf{X},\Map(\textbf{Y},R_{\textbf{X}|E_\textbf{Y}},keys_\mathcal{S}),keys_\mathcal{R})=\textbf{X}\cup\textbf{Y}.
\end{equation}
\item \textbf{Privacy.} The scheme is secure under the honest-but-curious adversary model. In particular, the receiver learns $\textbf{X}\cup\textbf{Y}$ but nothing about $\textbf{X}\cap\textbf{Y}$, and the sender learns nothing.
\item \textbf{Unbalanced efficiency.} For input sets $\textbf{X}$ for the receiver and $\textbf{Y}$ for the sender, if $|\textbf{Y}|=o(|\textbf{X}|)$, then the total communication volume of the scheme, as well as the sender's arithmetic cost, are $o(|\textbf{X}|)$.
\end{enumerate}
\end{definition}
\begin{remark}
Note that a non-unbalanced PSU protocol can be described with those
algorithms, considering that \Encode outputs $\emptyset$. However,
generally such a protocol will not satisfy the \textbf{Unbalanced
  efficiency} of the definition, in particular because usually the
receiver sends the first message that has a size proportional to its
set size, even if $|\textbf{Y}|=o(|\textbf{X}|)$.
This justifies the fact that the sender has to send the first message
in \Encode. Then, the receiver uses that message to \Reduce the
representation of its set to a smaller size. A third message is then
mandatory for the receiver to get the final information about the
union. Up to our knowledge, only~\cite{Tu:2023:CCS:UPSU}
and~\cite{DBLP:journals/iacr/ZhangCLPHWW24} are focused on the
unbalanced situation, and their protocols also fit this definition.
\end{remark}
\section{Polynomial-time Attack on the Partitioning of
  \cite{Tu:2023:CCS:UPSU,DBLP:journals/iacr/ZhangCLPHWW24}}\label{sec:leak}
We show here that there is a leaky Construction
in~\cite{Tu:2023:CCS:UPSU} that is reused in the first protocol
of~\cite{DBLP:journals/iacr/ZhangCLPHWW24} ($\mathsf{PSU}_{\mathsf{op}}$).
We do not delve into the full constructions, as the leaks actually occur
already in the setup phase for both of them.
In both protocols indeed, the sender arranges its small set in a
cuckoo hash table~\cite{DBLP:journals/ipl/DevroyeM03}, and the
receiver uses the same hash functions to arrange its large set in a
simple hash table~\cite{DBLP:conf/uss/Pinkas0Z14}.

A required property of both protocols is that the cuckoo hashing must
limit the hash table to have at most one of the sender's elements per bin.
Their protocols, taking as inputs the two sets can now be partitioned
into smaller protocols for each bin, taking as inputs the unique sender's
element and the reduced amount of receiver's elements contained in that bin.
This trick is widely used in private set intersection (PSI) protocols
for efficiency purposes but, as already mentioned
in~\cite{DBLP:conf/asiacrypt/KolesnikovRT019}, it cannot be used
directly in PSU protocols: there it can leak some clues on the set
intersection.

We further show in this section that the leaks increase with the
unbalancedness.
It is therefore always dangerous to use this trick in UPSU protocols.
In the following, the sender $\mathcal{S}$ owns a set \textbf{Y} of $m$
elements and the receiver $\mathcal{R}$ owns a set \textbf{X} of $n$
elements.
\subsection{Hashing Table Procedure.} For a statistical security
parameter $\lambda$, $\mathcal{S}$ selects publicly $i$ hash functions
$h_j:\lbrace 0,1\rbrace^{*}\longrightarrow [k]$, where
$k=k(\lambda)\approx m$ and $i=3$ for cost efficiency. Those hash
functions are chosen such that, with a failure probability bellow
$2^{-\lambda}$, the set \textbf{Y} can fit in a cuckoo hash table of
$k$ bins, without stash.
The cuckoo hashing of \textbf{Y} with $h_1,h_2,h_3$ then places each
element $y\in\textbf{Y}$ in exactly one bin between $h_1(y)$, $h_2(y)$
or $h_3(y)$, such that at the end of the procedure, each bin contains
at most $1$ element. On the other side, $\mathcal{R}$ hashes its set
\textbf{X} with a simple hash table, and each element
$x\in\textbf{X}$ is sent in all three bins $h_1(x)$, $h_2(x)$ and
$h_3(x)$.
\subsection{Leakage on the Intersection.}
By definition, a (U)PSU protocol should not leak any clue about the
intersection set $\textbf{X}\cap\textbf{Y}$ to the receiver.
Now, let $\mathcal{R}$ use the three hash functions $h_1,h_2,h_3$
given by $\mathcal{S}$, to hash \textbf{X} in the simple hash table
$\textbf{X}_S$; this hash table thus contains $3n$ elements in $k$
bins.
If there are $4$ distinct elements $x_1,x_2,x_3,x_4\in\textbf{X}$ in
at most 3 bins $b_1,b_2,b_3\in[k]$, then the receiver learns that
$\lbrace x_1,x_2,x_3,x_4\rbrace\not\subset \textbf{Y}$: indeed, the
hash functions are chosen by $\mathcal{S}$ such that each bin in the
sender's cuckoo hash table contains at most 1 element of \textbf{Y}.
Therefore, if $x_1,x_2,x_3,x_4\in\textbf{Y}$, then those elements
would have been sent in $4$ different bins, not $3$.
\subsection{Modelization.} We consider that the three hash
functions are independent and send any element uniformly at random in
$[k]$.
Let $(b_1,b_2,b_3)\in [k]^3$ be a bin triplet, with distinct
$b_1,b_2,b_3$.
The probability that an element $x$ is sent in
bins $\lbrace b_1,b_2,b_3\rbrace$ using one hash function is thus
$\frac{3}{k}$. This implies, with independence, that an element $x$ is
sent, resp., in bins $\lbrace b_1,b_2,b_3\rbrace$, resp. with $h_1$,
$h_2$ and $h_3$, with probability $p:=\left(\frac{3}{k}\right)^3$.
If $n$ elements are all sent in bins using $h_1$, $h_2$ and $h_3$, we
want to compute the probability that at least 4 of these elements are
sent in $\{b_1,b_2,b_3\}$. We thus model this with a random
variable~$V$ following a binomial distribution $\mathcal{B}(n,p)$.
Now $\mathbb{P}(V\geq 4)$ represents the sought probability with:
\begin{equation}\label{eq:pv4}
\mathbb{P}(V\geq 4)=1-\sum\limits_{i=0}^3\binom{n}{i} p^i(1-p)^{n-i}.
\end{equation}
\subsection{Probability of leakage.} In both papers, the
statistical security parameter is $\lambda=40$ and the size of the
hash table is $k\approx m+\log m$. In~\cref{tab:leak}, we
instantiate~\cref{eq:pv4} for some realistic unbalanced
parameters~$n,m$.
\begin{table}[ht]
\centering
\caption{Leaky situation probability lower bounds when $n>m$}\label{tab:leak}
\begin{tabular}{|c|c||c|}
\hline
$n$&$m$&$\mathbb{P}(V\geq 4)$\\
\hline
$2^{20}$&$2^{10}$&$\geq 2^{-26}$\\
\hline
$2^{10}$&$10$&$\geq 2^{-0,0138}\approx 99.05\%$\\
\hline
$2^{20}$&$10$&$\geq 1-10^{(-4492)}$\\
\hline
\end{tabular}
\end{table}
\Cref{tab:leak} shows that this crude lower bound on the probability
of leak is actually way larger than the statistical security parameter,
already for a single triplet of bins (while there are in fact
$n\choose{3}$, not independent, triplets).
\subsection{Polynomial-time Attack}
To prevent these leaks, one could increase the parameter $k$, \emph{viz.} $k>3\sqrt[3]{2^\lambda n}$, but this
brings an overhead in communication volume and arithmetic cost for the
sender that is no longer sustainable in the unbalanced context.
Note also that the leaky situation occurs in fact with probability 1
if $n>3k^3+1$: by the pigeonhole principle there must then be at least
one set of $3$ bins with $4$ distinct elements of $\textbf{X}$.
This is the root for an always possible polynomial-time
attack: in this case, an honest-but-curious attacker with an input set
\textbf{X}, of size $n$, knowing the three hash functions
$h_1,h_2,h_3:\lbrace 0,1\rbrace^{*}\longrightarrow [k]$, can compute
its simple hash table $\textbf{X}_S$. The attacker can check if $4$ of
distinct elements fall in a set of $3$ bins.
Otherwise, the attacker just adds further distinct elements of its
choice in the hash table, until there is a leak (at most $3k^3+2-n$
new known elements have to be added).
Thus partionning is always subject to this attack, requiring only a
polynomial number of operations.

In the following, we therefore propose proven secure protocols that do
not make use of partionning.
\section{Cryptographic Tools: Homomorphic Schemes}\label{sec:blocks}
\subsection{Linearly Homomorphic Encryption Scheme}
\textit{Notation:} In the following, we denote by $\widehat{x}$ a
value encrypted using linearly homomorphic encryption (LHE).

A LHE is a semantically secure public-key encryption scheme such that
for $(pk_L,sk_L)\leftarrow\textbf{L.Setup}(\kappa)$, a key pair for a
security parameter $\kappa$, together with
$\widehat{m_1}\leftarrow\textbf{L.E}_{pk_L}(m_1)$ and
$\widehat{m_2}\leftarrow\textbf{L.E}_{pk_L}(m_2)$, two encryptions,
their decryption $\textbf{L.D}$ must satisfy:
\begin{itemize}
\item \textit{homomorphic addition $+_L$: } $\textbf{L.D}_{sk_L}(\widehat{m_1}+_L \widehat{m_2})=m_1+m_2$.
\item \textit{cleartext-ciphertext product $\ltimes_L$: } $\textbf{L.D}_{sk_L}(m_1\ltimes_L \widehat{m_2})=m_1m_2$.
\end{itemize}
For the remainder, we assume that the chosen LHE scheme satisfies
IND-CPA security.
\begin{remark}\label{rem:Lvectmat}
We extend naturally the encryption and decryption algorithms for a LHE
to allow matrix or polynomials (seen as a vector of its coefficients)
as inputs. This allows for instance to extend $+_L$ and $\ltimes_L$ to
matrix or polynomial inputs. In particular, we can compute the
homomorphic polynomial product between a cleartext polynomial $A$ and
a ciphertext polynomial $\widehat{C}$.
\end{remark}
\subsection{Fully Homomorphic Encryption Scheme}
\textit{Notation:} In the following, we denote by  $\widetilde{x}$ a
value encrypted using fully homomorphic encryption (FHE).

A FHE is a semantically secure public-key encryption scheme such that
for $(pk_F,sk_F)\leftarrow\textbf{F.Setup}(\kappa)$, a key pair for a
security parameter $\kappa$,  together with
$\widetilde{m_1}\leftarrow\textbf{F.E}_{pk_F}(m_1)$ and
$\widetilde{m_2}\leftarrow\textbf{F.E}_{pk_F}(m_2)$, two encryptions,
their decryption $\textbf{F.D}$ must satisfy:
\begin{itemize}
\item \textit{homomorphic addition $+_F$: } $\textbf{F.D}_{sk_F}(\widetilde{m_1}+_F \widetilde{m_2})=m_1+m_2$.
\item \textit{cleartext-ciphertext product $\ltimes_F$: } $\textbf{L.D}_{sk_F}(m_1\ltimes_F \widetilde{m_2})=m_1m_2$.
\item \textit{homomorphic product $\times_F$: } $\textbf{F.D}_{sk_F}(\widetilde{m_1}\times_F \widetilde{m_2})=m_1m_2$.
\end{itemize}
\begin{remark}
Similarly to the LHE, we extend naturally the scheme for matrix and
polynomial inputs, and we assume for the purpose of our proofs
that the FHE is IND-CPA secure.
\end{remark}
\subsection{Practical Tools in FHE.}
In practice, actual FHE schemes, like the BGV
scheme~\cite{DBLP:journals/toct/BrakerskiGV14}, are constructed on top
of a leveled fully homomorphic encryption scheme, allowing a bounded
multiplicative depth algorithms.
The latter is extended to arbitrary depth using a
\textit{bootstrapping} procedure. A message is encrypted with a random
noise that grows at each homomorphic operation.
If the noise becomes too large, the ciphertext is no longer
decipherable.
Before that, a \textit{modulus switching} procedure can be performed
(in particular, after each homomorphic product), reducing the noise
(and the size of the ciphertext).
When the ciphertext size is even too small for another modulus
switching, then only a bootstrapping is required (increasing back the
size of the ciphertext, while regaining a small noise).
Bootstrapping being usually quite expensive, it means that the
multiplicative depth of the protocols has to be controlled.
We further introduce two tools that we use in practice in our
protocols. The first one is a \textit{noise flooding} which, together with a
\textit{shortening}, allows to ensure {\em circuit privacy}, while
incidentally reducing the communication volume.
The second tool is \textit{batching}, that allows
single-instruction multiple-data (SIMD) homomorphic operations.
\subsubsection{Noise Flooding and Shortening}
In some implementations of FHE, ciphertexts size and noise could
contain some information about the homomorphic operation performed.
To prevent this, and thus ensure circuit privacy, some schemes often
need another algorithm, called \textbf{flood}, that, given an FHE
ciphertext $\widetilde{y}$, performs a noise flooding\footnote{For
  instance, by adding a random noise.} together with as many modulus
switching as possible (to preserve the decryption's correctness).
The resulting ciphertext encrypts the same cleartext, but can no more
be multiplied homomorphically without a bootstrap.
Now, for $f,g$ two arithmetic circuits, let $a,b,y$ be such that
$f(a)=g(b)=y$ and let $\widetilde{a},\widetilde{b}$ be their
respective FHE encryptions, the decryption key owner should then not be
able to distinguish $\textbf{flood}(f(\widetilde{a}))$ from $\textbf{flood}(g(\widetilde{b}))$.

\subsubsection{Batching}
From the construction of cleartext and ciphertext spaces in FHE
schemes, it is often possible to batch together several cleartexts so
that encrypting them will result in a single ciphertext.
We will denote by $\langle y_i\rangle_{i\in[m]}$ a batch of $m$
cleartexts. For
$\widetilde{\textbf{y}}\leftarrow\textbf{F.E}_{pk_F}(\langle
y_i\rangle_{i\in[m]})$ an encryption of batched cleartexts and $f$ a
circuit, the batching correctness implies that
$f(\widetilde{\textbf{y}})$ is an encryption of $\langle
f(y_i)\rangle_{i\in[m]}$.
\section{Optimal Communication Volume, Low Depth, Batchable and Parallelizable UPSU Protocol} \label{sec:UPSU}
In our protocol, we represent the set $\mathbf{X}$ owned by the
receiver as the polynomial $P_{\mathcal{R}}(Z) =
\prod_{x\in\mathbf{X}} (Z-x)$ and we evaluate $P_{\mathcal{R}}$  on
each element owned by the sender. The receiver must learn only the
elements of $\mathbf{Y}$ that are not a root of $P_{\mathcal{R}}$, and
the evaluation must be performed homomorphically. The idea to keep the
communication volume proportional to the size $|\mathbf{Y}|$ of the
sender's set, is that the sender first sends its elements, encrypted. The receiver performs the evaluation homomorphically on its polynomial in clear. This requires both additions and multiplications, whence the need for a FHE scheme. For efficiency purposes, we introduce a homomorphic polynomial evaluation algorithm that is highly parallelizable, uses properly the batch and has a low multiplicative depth. Finally, to minimize the number of rounds and avoid the expensive usage of FHE when it is not required, we make a transition from the FHE to the LHE scheme and conclude our protocol similarly to~\cite{DBLP:conf/acns/Frikken07}.

\subsection{Fully Homomorphic Batched Scalar Multi-point Evaluation }
For a pair of FHE keys $(pk_F,sk_F)$, a homomorphic batched scalar multi-point evaluation algorithm, denoted \textbf{F.BSMEv}, is an algorithm that, given a batched ciphertext $\widetilde{\textbf{y}}\leftarrow\textbf{F.E}_{pk_F}(\langle y_i\rangle_{i\in[m]})$ and a polynomial $P$ in clear, satisfies
\begin{equation}
\textbf{F.D}_{sk_F}(\textbf{F.BSMEv}(P,\widetilde{\textbf{y}}))=\langle P(y_i)\rangle_{i\in[m]}.
\end{equation}
This algorithm evaluates a clear polynomial in many encrypted evaluation points. Its main goal is to be efficient in practice. This requires to use properly the batching, to have a low multiplicative depth, to lessen the number of homomorphic products and to be as parallelizable as possible.
\begin{proposition}\label{prop:BSMEv}
Let $P$ be a clear polynomial of degree $n$ given as a product of $\sqrt{n}$ polynomials of degrees $\sqrt{n}$, and let $\widetilde{\textbf{y}}\leftarrow\textbf{F.E}_{pk_F}(\langle y_i\rangle_{i\in[m]})$ be a batched ciphertext corresponding to $m$ plaintexts. Then $\textbf{F.BSMEv}(P,\widetilde{\textbf{y}})$ can be computed in $O(mn)$ arithmetic operations and a depth $\lceil \log n\rceil+1$.
\end{proposition}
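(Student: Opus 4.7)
The plan is to use a baby-step / giant-step strategy on the factored representation of $P$. Write $P = \prod_{k=1}^{s} P_k$ with $s = \sqrt{n}$ and $\deg P_k = s$, so that $P(Z)$ is evaluated in three stages: (i) precompute homomorphic powers of $\widetilde{\textbf{y}}$ up to degree $s$, (ii) use these powers to evaluate each factor $P_k(\widetilde{\textbf{y}})$ as a cleartext-ciphertext scalar product, and (iii) combine the $s$ resulting ciphertexts via a balanced binary product tree.

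For stage (i), I would compute $\widetilde{\textbf{y}}^2, \widetilde{\textbf{y}}^3, \ldots, \widetilde{\textbf{y}}^{s}$ by iterated squaring, doubling the exponent and filling in intermediate powers by a single $\times_F$ multiplication each. This uses $O(\sqrt n)$ batched homomorphic products and has multiplicative depth $\lceil \log \sqrt n\rceil = \lceil (\log n)/2\rceil$. For stage (ii), writing $P_k(Z) = \sum_{j=0}^{s} a_{k,j} Z^j$, one evaluates
\begin{equation*}
P_k(\widetilde{\textbf{y}}) \;=\; \sum_{j=0}^{s} a_{k,j} \ltimes_F \widetilde{\textbf{y}}^{\,j},
\end{equation*}
which consists only of cleartext-ciphertext scalar products and homomorphic additions. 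Each $P_k$ costs $O(\sqrt n)$ batched operations, so all $s$ factors together cost $O(n)$ batched operations and contribute depth exactly $1$ (the additions are free in depth). For stage (iii), the $\sqrt n$ ciphertexts $P_k(\widetilde{\textbf{y}})$ are combined with $\sqrt n -1$ batched $\times_F$ products arranged as a balanced binary tree, costing $O(\sqrt n)$ batched products and depth $\lceil \log \sqrt n\rceil$.

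Adding the depth contributions gives $\lceil (\log n)/2\rceil + 1 + \lceil (\log n)/2\rceil = \lceil \log n\rceil + 1$, matching the claimed bound. For the operation count, each batched operation acts simultaneously on the $m$ plaintext slots, so stages (i) and (iii) cost $O(m\sqrt n)$ arithmetic operations each, while stage (ii) dominates with $O(mn)$ arithmetic operations, giving a total of $O(mn)$. Batching correctness of $\textbf{F.E}$ and the compatibility of $+_F$, $\ltimes_F$, $\times_F$ with batched slots together imply that the returned ciphertext decrypts to $\langle P(y_i)\rangle_{i\in[m]}$.

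The main subtlety is the bookkeeping of multiplicative depth across the three stages: the factorization of $P$ into $\sqrt n$ blocks of degree $\sqrt n$ is precisely what balances the depth of the power-precomputation against the depth of the combining product tree, so that both are $\lceil(\log n)/2\rceil$; a single extra level comes from the scalar products in stage (ii). Any unbalanced choice of block sizes, or trying to evaluate $P$ directly as a polynomial of degree $n$ via Horner's rule, would break either the depth bound or the arithmetic bound, so this particular split is essential.
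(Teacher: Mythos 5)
Your proposal is correct and follows essentially the same three-stage construction as the paper's proof: precompute the batched powers $\widetilde{\textbf{y}}^j$ for $j\le\sqrt{n}$ in depth $\lceil\log\sqrt{n}\rceil$, evaluate each degree-$\sqrt{n}$ factor by a cleartext--ciphertext inner product in depth $1$ at total cost $O(mn)$, and recombine the $\sqrt{n}$ results with a balanced homomorphic product tree in depth $\lceil\log\sqrt{n}\rceil$. Your added remarks on how to realize the power precomputation by pairing already-computed powers, and on why the balanced $\sqrt{n}\times\sqrt{n}$ split is the right choice, are consistent with (and slightly more explicit than) the paper's argument.
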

\begin{proof}
The algorithm is built in three steps. First, compute homomorphically the encryption of $y_i^j$, for $i\in[m]$ and $j\in[\sqrt{n}]$. Using batching, this costs $m\sqrt{n}$ homomorphic products $\times_F$, in depth $\log \sqrt{n}$. The result is an encrypted vector $\overrightarrow{\widetilde{\textbf{v}}}$. Let $P=P_1\times\cdots\times P_{\sqrt{n}}$, with $\deg P_j=\sqrt{n}$. Then, write 
each $P_j$ as a vector of coefficients and 
compute the encryptions $\widetilde{P_j(\textbf{y})}$ of $\langle P_j(y_i)\rangle_{i\in[m]}$, for $j\in[\sqrt{n}]$, using homomorphic inner products between the vectors of coefficients and $\overrightarrow{\widetilde{\textbf{v}}}$.
This requires $nm$ cleartext-ciphertext products $\ltimes_F$, in multiplicative depth $1$. Finally, reconstruct homomorphically the encryption of $\langle P(y_i)\rangle_{i\in[m]}$ from $\lbrace \widetilde{P_j(\textbf{y})}\rbrace_{j\in[\sqrt{n}]}$ with a (homomorphic) binary multiplicative tree. This requires $m\sqrt{n}$ homomorphic products $\times_F$ in depth $\log \sqrt{n}$. 
\end{proof}
\subsection{Formalization of the protocol} 
Let $\{x_i\}_{i\in[n]}$, $\lbrace y_i\rbrace_{i\in[m]}\subset\mathbb{M}$, $n\ge m$, be the sets of the receiver $\mathcal{R}$ and of the sender $\mathcal{S}$, respectively. 
For the sake of simplicity, we assume that $\mathbb{M}$ is a finite field, and that it is the (common) plaintext space of an FHE and an LHE. This assumption is met in our implementations, and we propose in~\cref{subsec:compati} some slight changes to keep the correctness of our protocol under other assumptions. The respective ciphertext spaces for LHE and FHE are denoted $\mathbb{E}_L$ and $\mathbb{E}_F$.
Formally, our protocol is built with the algorithms \Setup, \Encode,
\Reduce, \Map and \Union respectively presented
in~\cref{algo:Setup1,algo:Encode1,algo:Reduce1,algo:Map1,algo:Union1}.
A more visual version is presented in~\cref{pro:para/batch}.

\begin{theorem}\label{thm:UPSU1}
The protocol built with the algorithms \Setup, \Encode, \Reduce, \Map
and \Union
(\cref{algo:Setup1,algo:Encode1,algo:Reduce1,algo:Map1,algo:Union1})
is a secure unbalanced private set union scheme under the
honest-but-curious adversary model and computes the set union with the
asymptotic complexity bounds presented in~\cref{tab:costUPSU1}
\begin{table}[ht]
\centering
\caption{Cost analysis of~\cref{pro:para/batch} for $n> m$}\label{tab:costUPSU1}
\begin{tabular}{l||c|c|c|c|}

Algorithm&Ar. Cost for $\mathcal{R}$&Ar. Cost for $\mathcal{S}$&
Comm. Vol.& Depth\\
\hline\hline
\Setup &$O(1)$&$O(1)$&$O(1)$&\\
\hdashline
\Encode &$O(1)$&\boldmath$O(m)$&\boldmath$O(m)$&\\
\hdashline
\Reduce &\boldmath$O(mn)$&$O(1)$&\boldmath$O(m)$&\boldmath$\lceil\log  n\rceil+1$\\
\hdashline
\Map &$O(1)$&\boldmath$O(m)$&\boldmath$O(m)$&\\
\hdashline
\Union &$O(m)$&$O(1)$&$O(1)$&\\
\hline\hline
\textbf{Total}&$O(mn)$&$O(m)$&$O(m)$&$\lceil\log n\rceil+1$\\

\end{tabular}
\end{table}.
\end{theorem}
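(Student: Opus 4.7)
My plan is to check in turn the three requirements of \cref{sec1}'s UPSU definition---correctness, privacy, and unbalanced efficiency---and then verify the per-algorithm bounds of \cref{tab:costUPSU1}. For correctness I would trace a single sender element $y_i$ through the five algorithms. After \Setup and \Encode, $\mathcal{R}$ holds $\widetilde{\mathbf{y}} = \textbf{F.E}_{pk_F}(\langle y_i\rangle_{i\in[m]})$; then \Reduce invokes \textbf{F.BSMEv}, which by \cref{prop:BSMEv} yields an FHE ciphertext of $\langle P_{\mathcal{R}}(y_i)\rangle_{i\in[m]}$, with $P_{\mathcal{R}}(y_i)=0$ iff $y_i\in\mathbf{X}$. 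The protocol then masks and bridges this FHE value into an LHE ciphertext so that the composition of \Map and \Union reconstructs $\mathbf{X}\cup\mathbf{Y}$ at $\mathcal{R}$: intersection slots decrypt to a value that contributes no new element (either zero or something already in $\mathbf{X}$), while non-intersection slots decrypt to data from which $\mathcal{R}$ recovers $y_i$ and adjoins it to $\mathbf{X}$.

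For privacy I would follow the simulation paradigm of \cite{DBLP:books/sp/17/Lindell17} and exhibit a PPT simulator for each corrupted party. The simulator for $\mathcal{S}$, given only $\mathbf{Y}$, replaces the LHE ciphertext returned by $\mathcal{R}$ in \Reduce with a fresh LHE encryption of an arbitrary plaintext; indistinguishability from the real transcript is immediate by IND-CPA of the LHE. The simulator for $\mathcal{R}$, given $\mathbf{X}$ and $\mathbf{Y}\setminus\mathbf{X}$, replaces the batched FHE ciphertext produced in \Encode by a fresh encryption (IND-CPA of the FHE) and honestly generates the \Map output from $\mathbf{Y}\setminus\mathbf{X}$; the circuit-privacy property ensured by the noise-flooding-and-shortening step of \cref{sec:blocks} guarantees that the distribution of the returned ciphertext is independent of both the actual homomorphic circuit and the plaintext values $P_{\mathcal{R}}(y_i)$ at intersection slots.

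For efficiency I would tally arithmetic, communication, and depth per algorithm: \Setup and \Union do only constant or linear work; \Encode is $O(m)$ both for $\mathcal{S}$'s computation and transmitted size; \Reduce is dominated by \textbf{F.BSMEv} at cost $O(mn)$ for $\mathcal{R}$ and depth $\lceil\log n\rceil+1$ by \cref{prop:BSMEv}, with $O(m)$ extra work for the bridge and output; \Map performs $O(m)$ LHE operations at $\mathcal{S}$ with $O(m)$ return size. Summing per column matches \cref{tab:costUPSU1}. The main obstacle I anticipate is the $\mathcal{R}$-side indistinguishability: $\mathcal{R}$ sees both a batched FHE ciphertext and a returned LHE ciphertext, so I must argue that their joint distribution leaks nothing about $\mathbf{X}\cap\mathbf{Y}$ beyond what is implied by $\mathbf{X}\cup\mathbf{Y}$. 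This reduces to a careful combination of IND-CPA, circuit privacy of the flooded FHE output, and an information-theoretic masking argument showing that the LHE blindings uniformly randomize intersection slots so that $\mathcal{R}$ cannot distinguish them from non-intersection slots individually.
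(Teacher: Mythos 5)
Your correctness trace and the per-algorithm cost accounting match the paper: correctness reduces to the fact that $P_{\mathcal{R}}(y)=0$ if and only if $y\in\mathbf{X}$, together with the two schemes sharing the same plaintext field, and the $O(mn)$ receiver cost and depth $\lceil\log n\rceil+1$ come directly from \cref{prop:BSMEv}. The genuine gap is in the privacy argument, where you have swapped the roles of the cryptographic ingredients between the two simulators. In \Reduce the sender receives both the LHE ciphertexts $\lbrace \widehat{k_i}\rbrace_{i\in[m]}$ and the FHE ciphertext $\widetilde{\textbf{h}}$, and the sender \emph{holds} $sk_F$: it decrypts $\widetilde{\textbf{h}}$ and sees the plaintexts $h_i=k_i+P_{\mathcal{R}}(y_i)$ in the clear. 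IND-CPA of the LHE alone therefore cannot close the sender's simulation. What is actually needed there is (i) the information-theoretic masking by the uniform $k_i$, which makes each decrypted $h_i$ uniform and hence independent of whether $y_i\in\mathbf{X}$, (ii) circuit privacy of $\widetilde{\textbf{h}}$ via the \textbf{flood} operation, since otherwise the noise and size of that ciphertext could leak the degree-$n$ evaluation circuit to the holder of $sk_F$, and (iii) IND-CPA of the LHE only to decorrelate $\widehat{k_i}$ from the now-known $h_i$. You instead place the flooding/circuit-privacy and masking arguments on the receiver's side, but the receiver never sees a flooded FHE ciphertext: it produces one, sends it away, and does not hold $sk_F$ in any case.

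On the receiver's side the issue is different from what you describe. The returned $\widehat{e_{\pi(i)}}$ are decrypted by $\mathcal{R}$ (who holds $sk_L$) to the \emph{unmasked} values $P_{\mathcal{R}}(y_i)$, so $\mathcal{R}$ does and must distinguish intersection slots (which decrypt to $0$) from the others --- that is precisely how \Union works. Your claim that the LHE blindings prevent this cannot be right, since the masks are removed by $\mathcal{S}$ in \Map before the values return. What must be argued instead is that the zero slots reveal nothing about \emph{which} elements lie in the intersection: the paper's simulator $S_1$ builds a stand-in set $\mathbf{Z}$ consisting of $|\mathbf{X}\cap\mathbf{Y}|$ uniformly chosen elements of $\mathbf{X}$ together with $(\mathbf{X}\cup\mathbf{Y})\setminus\mathbf{X}$, and outputs the honest receiver view on inputs $(\mathbf{X},\mathbf{Z})$. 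Indistinguishability then follows from IND-CPA of the FHE applied to the first message $\widetilde{\textbf{y}}$ (the receiver lacks $sk_F$), from the fact that every intersection slot decrypts to exactly $0$ regardless of which element caused it, and from the uniform permutation $\pi$. Your plan would need to be reorganized along these lines before it yields a proof of the privacy property.
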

\begin{proof}
The correctness of the scheme relies on the fact that an element $y$ owned by the sender has to be added to the receiver's set \textbf{X} if, and only if, $P_{\mathcal{R}}(y)\neq 0$. The correctness of the evaluations is implied by the correctness of the FHE and LHE encryption schemes, as well as the fact that both scheme share the same field as plaintext space. The security proof, using simulation, is presented in~\cref{app:secuUPSU}.
\end{proof}

\begin{algorithm}[htbp]
\caption{$\Setup(\kappa)$}\label{algo:Setup1}
\begin{flushleft}
\textbf{Input:} A security parameter $\kappa$.\\
    \textbf{Outputs:} Pairs of LHE keys $(pk_L,sk_L)$ and FHE keys $(pk_F,sk_F)$, both with $\kappa$ bits of security. The secret key $sk_L$ is owned by the receiver, and $sk_F$ is owned by the sender.
\end{flushleft}
\begin{algorithmic}[1]
\State $\mathcal{R}$: compute $(pk_L,sk_L)\leftarrow \textbf{L.Setup}(\kappa)$ and send $pk_L$ to $\mathcal{S}$;
\State $\mathcal{S}$: compute $(pk_F,sk_F)\leftarrow \textbf{F.Setup}(\kappa)$ and send $pk_F$ to $\mathcal{R}$;
\State $\mathcal{R}$: \Return $keys_\mathcal{R}\leftarrow\lbrace(pk_L,sk_L),pk_F\rbrace$;
\State $\mathcal{S}$: \Return $keys_\mathcal{S}\leftarrow\lbrace(pk_F,sk_F),pk_L\rbrace$;
\end{algorithmic}
\end{algorithm}

\begin{algorithm}[htbp]
\caption{$\Encode(\textbf{Y},keys_\mathcal{S})$}\label{algo:Encode1}
\begin{flushleft} \textbf{Input:} A set of plaintexts $\textbf{Y}=\lbrace y_i\rbrace_{i\in[m]}\subset\mathbb{M}$ and $keys_\mathcal{S}=\lbrace(pk_F,sk_F),pk_L\rbrace$.\\
\textbf{Output:} A FHE ciphertext $\widetilde{\textbf{y}}\in{\mathbb{E}_F}$, such that $\textbf{F.D}_{sk_F}(\widetilde{\textbf{y}})=\langle y_i\rangle_{i\in[m]}$.
\end{flushleft}
\begin{algorithmic}[1]
\State $\mathcal{S}$: compute $\widetilde{\textbf{y}}\leftarrow\textbf{F.E}_{pk_F}(\langle y_i\rangle_{i\in[m]})$ and send $\widetilde{\textbf{y}}$ to $\mathcal{R}$;\Comment{Batch and encrypt}
\State $\mathcal{R}$: \Return $\widetilde{\textbf{y}}$;
\end{algorithmic}
\end{algorithm}

\begin{algorithm}[htbp]
\caption{$\Reduce(\textbf{X},\widetilde{\textbf{y}},keys_\mathcal{R})$}\label{algo:Reduce1}
\begin{flushleft} \textbf{Input:} A set of plaintexts $\textbf{X}=\lbrace x_j\rbrace_{j\in[n]}\subset\mathbb{M}$, a FHE ciphertext $\widetilde{\textbf{y}}\in{\mathbb{E}_F}$ and $keys_\mathcal{R}=\lbrace(pk_L,sk_L),pk_F\rbrace$.\\
\textbf{Output:} A reduced FHE ciphertext $\widetilde{\textbf{h}}\in{\mathbb{E}_F}$, and a set of LHE ciphertexts $\lbrace \widehat{k_i}\rbrace_{i\in[m]}\subset{\mathbb{E}_L}$, such that, for $\textbf{F.D}_{sk_F}(\widetilde{\textbf{y}})=\langle y_i\rangle_{i\in[m]}$, $\textbf{F.D}_{sk_F}(\widetilde{\textbf{h}})=\langle \textbf{L.D}_{sk_L}(\widehat{k_i})+\prod\limits_{j\in[n]} (y_i-x_j)\rangle_{i\in[m]}$.
\end{flushleft}
\begin{algorithmic}[1]
\State $\mathcal{R}$: compute $P_\mathcal{R}\leftarrow \prod\limits_{j\in[n]}(Z-x_j)$;
\ForAll{$i \in [m]$}
\State $\mathcal{R}$: sample $k_i\xleftarrow{\$}\mathbb{M}$ uniformly;
\State $\mathcal{R}$: compute $\widehat{k_i}\leftarrow \textbf{L.E}_{pk_L}(k_i)$;\Comment{Encrypt the masks under LHE}
\EndFor
\State $\mathcal{R}$: compute $\widetilde{\textbf{k}}\leftarrow \textbf{F.E}_{pk_F}(\langle k_i\rangle_{i\in[m]})$;\Comment{Encrypt the batched masks under FHE}
\State $\mathcal{R}$: compute $\widetilde{\textbf{e}}\leftarrow\textbf{F.BSMEv}(P_\mathcal{R},\widetilde{\textbf{y}})$;\Comment{Homom. multi-point evaluation}
\State $\mathcal{R}$: compute $\widetilde{\textbf{h}}\leftarrow\textbf{flood}\left(\widetilde{\textbf{k}}+_F\widetilde{\textbf{e}}\right)$;\Comment{Mask, reduce and flood the evaluations}
\State $\mathcal{R}$: send $\widetilde{\textbf{h}}$ and $\lbrace \widehat{k_i}\rbrace_{i\in[m]}$ to $\mathcal{S}$;
\State $\mathcal{S}$: \Return $\lbrace\widetilde{\textbf{h}},\lbrace \widehat{k_i}\rbrace_{i\in[m]}\rbrace$;
\end{algorithmic}
\end{algorithm}

\begin{algorithm}[htbp]
\caption{$\Map(\textbf{Y},\lbrace\widetilde{\textbf{h}},\lbrace \widehat{k_i}\rbrace_{i\in[m]}\rbrace,keys_\mathcal{S})$}\label{algo:Map1}
\begin{flushleft}
\textbf{Input:} A set of $m$ plaintexts $\textbf{Y}=\lbrace y_i\rbrace_{i\in[m]}\subset\mathbb{M}$, a FHE ciphertext $\widetilde{\textbf{h}}\in{\mathbb{E}_F}$, a LHE ciphertext set $\lbrace \widehat{k_i}\rbrace_{i\in[m]}\subset{\mathbb{E}_L}$, and $keys_\mathcal{S}=\lbrace(pk_F,sk_F),pk_L\rbrace$.\\
\textbf{Output:} A set of LHE ciphertext pairs $\lbrace (\widehat{e_j},\widehat{\eta_j})\rbrace_{j\in[m]}\subset\mathbb{E}_L^2$ such that, for $\textbf{F.D}_{sk_F}(\widetilde{\textbf{h}})=\langle h_i\rangle_{i\in[m]}$, $\textbf{L.D}_{sk_L}(\widehat{e_{\pi(i)}})=h_i-\textbf{L.D}_{sk_L}(\widehat{k_i})$ and $\textbf{L.D}_{sk_L}(\widehat{\eta_{\pi(i)}})=y_i\textbf{L.D}_{sk_L}(\widehat{e_{\pi(i)}})$ for $i\in[m]$.
\end{flushleft}
\begin{algorithmic}[1]
\State $\mathcal{S}$: compute $\langle h_i\rangle_{i\in[m]}\leftarrow\textbf{F.D}_{sk_F}(\widetilde{\textbf{h}})$;\Comment{Decrypt the masked evaluations}
\State $\mathcal{S}$: sample a uniform permutation $\pi\xleftarrow{\$}\mathfrak{S}_m$;
\ForAll{$i \in [m]$}
\State $\mathcal{S}$: compute $\widehat{e_{\pi(i)}}\leftarrow\textbf{L.E}_{pk_L}(h_i)-_L\widehat{k_i}$;\Comment{Remove the masks under LHE}
\State $\mathcal{S}$: compute $\widehat{\eta_{\pi(i)}}\leftarrow y_i\ltimes_L \widehat{e_{\pi(i)}}$;\Comment{Multiply by the evaluation point}
\EndFor
\State $\mathcal{S}$: send $\lbrace (\widehat{e_{\pi(i)}},\widehat{\eta_{\pi(i)}})\rbrace_{i\in[m]}$ to $\mathcal{R}$;
\State $\mathcal{R}$: \Return $\lbrace (\widehat{e_j},\widehat{\eta_j})\rbrace_{j\in[m]}$;
\end{algorithmic}
\end{algorithm}

\begin{algorithm}[htbp]
\caption{$\Union(\textbf{X},\lbrace (\widehat{e_j},\widehat{\eta_j})\rbrace_{j\in[m]},keys_\mathcal{R})$}\label{algo:Union1}
\begin{flushleft} \textbf{Input:} A set of plaintexts
$\textbf{X}=\lbrace x_j\rbrace_{j\in[n]}\subset\mathbb{M}$, a set of LHE ciphertext pairs
$\lbrace (\widehat{e_j},\widehat{\eta_j})\rbrace_{j\in[m]}\subset{\mathbb{E}_L}^2$.\\
\textbf{Output:} A set of plaintexts $\textbf{Z}\subset\mathbb{M}$, such that $\textbf{Z}=\textbf{X}\cup\lbrace \textbf{L.D}_{sk_L}(\widehat{\eta_j})\textbf{L.D}_{sk_L}(\widehat{e_j})^{-1}:\textbf{L.D}_{sk_L}\neq 0\rbrace$.\\
\end{flushleft}
\begin{algorithmic}[1]
\State
$\mathcal{R}$: compute $\textbf{Z}\leftarrow \textbf{X}$;
\ForAll{$j \in [m]$} \State
$\mathcal{R}$: compute $e_j\leftarrow
\textbf{L.D}_{sk_L}(\widehat{e_j})$;
\If{$e_j\neq 0$} \State
$\mathcal{R}$: compute $\textbf{Z}\leftarrow \textbf{Z} +\lbrace
\textbf{L.D}_{sk_L}(\widehat{\eta_j})e_j^{-1}\rbrace$;
\Comment{Add the non-roots to the set}
\EndIf
\EndFor
\State $\mathcal{R}$: \Return \textbf{Z};
\end{algorithmic}
\end{algorithm}

\begin{protocol*}[ht]
\caption{Optimal communication volume, low depth, batchable and parallelizable UPSU protocol}\label{pro:para/batch}
\begin{tabular}{l c l}
{\large$\mathcal{R}$:} $\textbf{X}=\lbrace x_i\rbrace_{i\in[n]}$ & & {\large$\mathcal{S}$:} $\textbf{Y}=\lbrace y_i\rbrace_{i\in[m]}$\\ \hline\hline
&\Setup &\\
$\lbrace pk_F,sk_L,pk_L\rbrace$&$\xleftrightarrow{\hspace{3cm}}$&$\lbrace pk_L,sk_F,pk_F\rbrace$\\
\hdashline
&\Encode & \\
&$\xleftarrow[\hspace{3cm}]{\widetilde{\textbf{y}}}$&$\widetilde{\textbf{y}}\leftarrow\textbf{F.E}_{pk_F}(\langle y_1\rangle_{i\in[m]})$\\
\hdashline
$P_{\mathcal{R}}\leftarrow\prod(Z-x_i)$
&\Reduce &\\
$k_i \xleftarrow{\$}\mathbb{M}$&&\\
$\widehat{k_i}\leftarrow\textbf{L.E}_{pk_L}( k_i )$&&\\
$\widetilde{\textbf{k}}\leftarrow\textbf{F.E}_{pk_F}(\langle k_i\rangle_{i\in[m]})$&&\\
$\widetilde{\textbf{e}}\leftarrow\textbf{F.BSMEv}(P_{\mathcal{R}},\widetilde{\textbf{y}})$&&\\
$\widetilde{\textbf{h}}\leftarrow \textbf{flood}\left(\widetilde{\textbf{k}}+_F\widetilde{\textbf{e}}\right)$&$\xrightarrow[\hspace{3cm}]{\widetilde{\textbf{h}},\ \lbrace \widehat{k_i} \rbrace_{i\in[m]}}$&\\
\hdashline
&\Map &$\langle h_i\rangle_{i\in[m]}\leftarrow\textbf{F.D}_{sk_F}(\widetilde{\textbf{h}})$\\
&&$\pi\xleftarrow{\$}\textfrak{S}_m $\\
&&$\widehat{e_{\pi(i)}}\leftarrow\widehat{h_i}-_L\widehat{k_i}$\\
&$\xleftarrow[\hspace{3cm}]{\left\{\left(\widehat{e_{\pi (i)}},\widehat{\eta_{\pi (i)}}\right)\right\}_{i\in[m]}}$&$\widehat{\eta_{\pi(i)}}\leftarrow y_i\ltimes_L\widehat{e_{\pi(i)}}$\\
\hdashline
$e_j\leftarrow\textbf{L.D}_{sk_L}\left(\widehat{e_j}\right)$
&\Union &\\
$\left\{\begin{array}{ll}
e_j=0 \Rightarrow& \perp\\
e_j\neq 0 \Rightarrow& \left\{\begin{array}{l}
\eta_j\leftarrow \textbf{L.D}_{sk_L}\left(\widehat{\eta_j}\right)\\
\textbf{X}\leftarrow\textbf{X}\cup\{ \eta_j e_j^{-1}\}\end{array}\right.
\end{array}\right.
$&&\\
\hline\hline
\textbf{Return} \textbf{X}&&\\
\end{tabular}
\end{protocol*}


\subsection{Timings in HElib}\label{subs:timings}
To instantiate~\cref{pro:para/batch}, we used the C++ open source library
HElib which implements the BGV cryptosystem
\cite{DBLP:journals/toct/BrakerskiGV14}; the source code is available
at~\url{https://github.com/GalanAl/CoUPSU}. We could have done our
implementations on any other library implementing exact FHE schemes
(BGV, BFV, ...), as
SEAL\footnote{\url{https://github.com/microsoft/SEAL}} and
OpenFHE\footnote{\url{https://github.com/openfheorg/openfhe-development}},
but we found more freedom in the choice of parameters with HElib. We
used a Intel(R) Xeon(R) Gold 6330 CPU @ 2.00GHz with 56 threads for our experiments. The plaintext space is the field
$\mathbb{F}_{6143^{32}}$, that contains 384 bit-length words. The
ciphertext space allows to batch up to $1024$ plaintexts together. In
the following, we consider the size of the sender's set to be a
constant $m=|\textbf{Y}|=1024$, and we want to analyze the
communication volume and the runtime of both parties when the
receiver's set size $n=|\textbf{X}|$ grows exponentially, from
$2^{10}$ to $2^{20}$ elements. We choose the other parameters of the
context in order to keep the correctness of the decryption after an
homomorphic circuit of multiplicative depth 21. Through the security
estimation given by HElib, we have $\kappa=115$. 

In our implementation, we use BGV restricted to linear operations as a LHE. Since the receiver is assumed to have significant computing power, we use parallelization on all the available threads, mainly for \textbf{F.BSMEv} in \Reduce. We ignore the communication volume and the runtime implied by \Setup as it can be done offline, and consider that both parties own a secret key and the two public keys. In the unbalanced situation, it is interesting to distinguish the runtime of both parties, presented in~\cref{fig:t(R)pro1} and~\cref{fig:t(S)pro1} respectively.

The experimental results confirm the expected asymptotic presented in~\cref{tab:costUPSU1}. In particular, the arithmetic cost for the sender is independent of the size of the receiver's set, and is low. On the receiver's side, the runtime values confirm the linear complexity in the size of its set. The runtime for the receiver is however quite large, around 5800 seconds for a set of size $2^{22}$, but the implementation could be optimized to reduce it. Also, the algorithm is highly parallelizable and the receiver is assumed to have a significant computing power, so the runtime may also be decreased by using more threads. \cref{fig:compro1} confirms that the communication volume is independent of the receiver's set size and we compare it to the values given in~\cite{Tu:2023:CCS:UPSU,Zhang:2023:Usenix:LPSU,DBLP:journals/iacr/ZhangCLPHWW24}.\footnote{In the other papers, the element bit-length is 128, but it is 384 in our implementations.} We observe that due to some optimizations, for example the leaky partitioning of~\cite{Tu:2023:CCS:UPSU}, the communication volume of previous protocols are smaller than ours when the receiver's set is small. Yet, for larger sets we obtain better results thanks to the optimal asymptotic volume of our protocol. 
\begin{figure}[ht]
\caption{\cref{pro:para/batch} experimental results}\label{fig:all}
\begin{subfigure}{0.5\textwidth}
\caption{Receiver's runtime of~\cref{pro:para/batch}}\label{fig:t(R)pro1}
\begin{tikzpicture}
\begin{loglogaxis}[
    xlabel={Receiver's set size $\vert \textbf{X}\vert $},
    ylabel={Runtime (s), $|\textbf{Y}|=2^{10}$},
    xmin=256, xmax=8382608,
    ymin=10, ymax=10000,
    xtick={256,1024,4096,16384,65536,262144,1048576,4191304},
    ytick={10,100,1000,10000},
    legend pos=north west,
    ymajorgrids=true,
    grid style=dashed,
    width=\textwidth,
    log basis x = {2}
]
\addplot[
    color=blue,
    mark=square,
    ]
    coordinates {
    (1024,13.18)(4096,18.7)(16384,35.35)(65536,99.54)(262144,326.03)(1048576,1404.9)(4191304,5751.24)
    };
\end{loglogaxis}
\end{tikzpicture}
\end{subfigure}
\begin{subfigure}{0.5\textwidth}
\caption{Sender's runtime of~\cref{pro:para/batch}}\label{fig:t(S)pro1}
\begin{tikzpicture}
\begin{semilogxaxis}[
    xlabel={Receiver's set size $\vert \textbf{X}\vert $},
    ylabel={Runtime (s), $|\textbf{Y}|=2^{10}$},
    xmin=256, xmax=8382608,
    ymin=0, ymax=3,
    xtick={256,1024,4096,16384,65536,262144,1048576,4191304},
    ytick={0,1,2,3,4,5},
    legend pos=north west,
    ymajorgrids=true,
    grid style=dashed,
    width=\textwidth,
    log basis x = {2}
]
\addplot[
    color=blue,
    mark=square,
    ]
    coordinates {
    (1024,2.16)(4096,2.16)(16384,2.16)(65536,2.16)(262144,2.16)(1048576,2.16)(4191304,2.16)
    };
\end{semilogxaxis}
\end{tikzpicture}
\end{subfigure}
\begin{subfigure}{0.645\textwidth}
\caption{Communication volume of~\cref{pro:para/batch}}\label{fig:compro1}
\begin{tikzpicture}
\begin{loglogaxis}[
    xlabel={Receiver's set size $\vert \textbf{X}\vert $},
    ylabel={Communication (in MB), $|\textbf{Y}|=2^{10}$},
    xmin=256, xmax=8382608,
    ymin=0, ymax=120,
    xtick={256,1024,4096,16384,65536,262144,1048576,4191304},
    ytick={0,1,10,100},
    legend pos= outer north east,
    ymajorgrids=true,
    grid style=dashed,
    width=\textwidth,
    log basis x = {2}
]
\addplot[
    color=blue,
    mark=square,
    ]
    coordinates {
    (1024,8.39)(4096,8.39)(16384,8.39)(65536,8.39)(262144,8.39)(1048576,8.39)(4191304,8.5)
    };
    \addlegendentry{$\begin{array}{l}
    \textrm{\cref{pro:para/batch} } (\kappa=115)\\
    (\kappa =97 \textrm{ for }n=2^{22})\\
    \,
    \end{array}$}
\addplot[
    color=red,
    mark=square,
    ]
    coordinates {
    (1024,1.61)(4096,1.61)(16384,2.68)(65536,2.68)(262144,2.37)(1048576,2.77)
    };
    \addlegendentry{\cite{Tu:2023:CCS:UPSU} $(\kappa=128)$}
\addplot[
    color=brown,
    mark=square,
    ]
    coordinates {
    (1024,0.23)(4096,0.53)(16384,1.76)(65536,6.25)(262144,26.17)(1048576,104.28)
    };
    \addlegendentry{\cite{Zhang:2023:Usenix:LPSU} $(\kappa=128)$}
\addplot[
    color=orange,
    mark=square,
    ]
    coordinates {
    (262144,3.9)(1048576,6.8)(4191304,11.19)
    };
    \addlegendentry{\cite{DBLP:journals/iacr/ZhangCLPHWW24} $(\kappa=128)$}
\end{loglogaxis}
\end{tikzpicture}
\end{subfigure}
\end{figure}

\section{Asymptotic Improvement for the Receiver}\label{sec:variants}
In this section, the idea is to use efficient computer algebra
algorithms adapted to the homomorphic encryption constraints.
This allows us to improve the arithmetic cost for the receiver, making
it quasi-linear in its -- large -- set size $n$.
We are able to do this while keeping an optimal communication volume,
$O(m)$, as well as 3 rounds and a sender's arithmetic cost independent
of the receiver's set size (quasi-linear it its -- small -- set size).
The downside is that batching is now more complex to set-up and that the
multiplicative depth is doubled, making their implementations
currently less efficient on the examples of~\cref{sec:UPSU}

More precisely, we reduce the arithmetic costs of our algorithms to
fast polynomial multiplication, between two ciphertext polynomials,
one cleartext and one ciphertext, or two cleartext polynomials.
Even if fast polynomial multiplication is quasi-linear in the degree,
the context may vastly modify the actual constant in the complexity
bound.
We thus detail in the following the number of polynomial
products in each context, for $A,B$ polynomials of degrees at most~$d$:
\begin{itemize}
\item $\mathcal{M}(d)$ is a bound on the arithmetic cost of the map
  $A,B\mapsto AB$;
\item $\mathcal{M}_F(d)$ is a bound on the arithmetic cost of the map
  $\widetilde{A},\widetilde{B}\mapsto\widetilde{A}\times_F\widetilde{B}$
  (note that a homomorphic polynomial product is considered to have a multiplicative depth 1 in that paper);
\item $\mathcal{M}_L(d)$ is a bound on the arithmetic cost of the map
  $A,\widehat{B}\mapsto A\ltimes_L\widehat{B}$.
\end{itemize}
\subsection{Efficient LHE and FHE Fast Multi-point Evaluation}
We consider $P_\mathcal{R}$ and $P_\mathcal{S}$, the two polynomials
whose roots are respectively the elements of the sets \textbf{X} and
\textbf{Y}.
To avoid the set intersection, we want to discard the points of \textbf{Y}
that evaluate $P_\mathcal{R}$ to zero.
Our fist remark is that it is sufficient to use the remainder
$R=P_\mathcal{R}\mod P_\mathcal{S}$ for the evaluations.
But now this polynomial has a degree at most $m-1$ for
$m=|\textbf{Y}|$.
Our idea, in~\cref{pro:modF}, is then to homomorphically compute and
evaluate this remainder instead.
We show how to preserve a fast arithmetic complexity bound, even under
LHE or FHE.
\subsubsection{Efficient Fully Homomorphic Euclidean Remainder.}\label{subs:modF}
In the context of homomorphic encryption, the impossibility of
branching, for instance, makes it more complex to design a {\em fast}
homomorphic remainder. The classical fast division algorithm computes
the quotient first by a fast modular inverse of the divisor and then
updates the remainder. The fast modular inverse is usually obtained by
a Newton-like iteration.
Here we need the division of a clear polynomial by a ciphered one.
Even then, this computation cannot be directly performed in a LHE
scheme since the divisor and the quotient, both encrypted, need to be
multiplied together.
Moreover, the need to invert the leading coefficient of the divisor
could also be an issue. We thus focus here on the case where the
divisor is monic.
We denote this algorithm with the operator $\textrm{\textbf{ mod}}_F$,
that, for $(pk_F,sk_F)\leftarrow\textbf{F.Setup}(\kappa)$, a FHE key
pair of security $\kappa$ and for two cleartext polynomials $A,B$ with
$B$ monic, do satisfy that:
\begin{equation}
\textbf{F.D}_{sk_F}\left(A \textrm{\textbf{ mod}}_F\left(\textbf{F.E}_{pk_F}\left(B\right)\right)\right)=A\mod B.
\end{equation}
\begin{proposition}\label{prop:Frem}
Let $A$ be a cleartext polynomial of degree $n$ and let
$\widetilde{B}$ be a FHE ciphertext polynomial which is an encryption
of a monic polynomial of degree $m< n$. $A\textrm{\textbf{ mod}}_F
\left(\widetilde{B}\right)$ can be computed in less than
$4\mathcal{M}_F(n-m)+O(n)$ arithmetic operations with a depth
$2\lceil\log(n-m+1)\rceil+1$. If an encryption of
$\overleftarrow{B}^{-1}\mod Z^{2^{\lceil \log m\rceil}}$ is given, the
depth is reduced to $2\lceil\log(n-m+1)-\log m\rceil+1$.
\end{proposition}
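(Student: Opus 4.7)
The plan is to reduce polynomial division to power series inversion via the classical reversal trick, then compute the inverse homomorphically by a Newton iteration tailored to the FHE setting. Writing $A = QB + R$ with $\deg R < m$, and denoting $\overleftarrow{P}(Z) = Z^{\deg P}P(1/Z)$ the reverse of $P$, one has $\overleftarrow{Q} \equiv \overleftarrow{A}\cdot \overleftarrow{B}^{-1} \pmod{Z^{n-m+1}}$. Since $B$ is monic, $\overleftarrow{B}$ has constant term $1$, so its inverse modulo $Z^k$ exists for every $k$ and can be computed in the encrypted domain.

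I would carry out the Newton iteration $G_0 \leftarrow 1$ and $G_{i+1} \leftarrow G_i \cdot (2 - G_i \times_F \overleftarrow{\widetilde{B}}) \bmod Z^{2^{i+1}}$, stopping at $i = \lceil\log(n-m+1)\rceil$ so that $G_i$ is an encryption of $\overleftarrow{B}^{-1} \bmod Z^{n-m+1}$. Each iteration consists of one truncated ciphertext squaring followed by one truncated product with $\overleftarrow{\widetilde{B}}$, hence has FHE depth exactly $2$. Using that $\mathcal{M}_F$ is superlinear, the geometric sum $\sum_i 2\mathcal{M}_F(2^{i+1}) = O(\mathcal{M}_F(n-m))$ bounds the total cost. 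The quotient is then recovered by $\widetilde{\overleftarrow{Q}} \leftarrow \overleftarrow{A}\ltimes_F G_k \bmod Z^{n-m+1}$: this is a cleartext–ciphertext product, costing one $\mathcal{M}_L$-style multiplication and, crucially, contributing no ciphertext–ciphertext depth. Reversing back gives an encryption $\widetilde{Q}$ of the quotient.

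Finally $\widetilde{R} \leftarrow A - \widetilde{Q}\times_F \widetilde{B}$ adds exactly one more level of ciphertext–ciphertext depth, giving a total depth $2\lceil\log(n-m+1)\rceil + 1$ as announced. For the cost, I would observe that only the low $m$ coefficients of $QB$ are informative (the top $n-m+1$ must cancel with those of $A$), so a short-product computes $QB \bmod Z^m$ using polynomials effectively of size $O(n-m)$; combined with the Newton cost and the cleartext quotient recovery, this yields the stated bound $4\mathcal{M}_F(n-m) + O(n)$, the additive $O(n)$ absorbing the linear additions, reversals, and the few cleartext operations on $A$.

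The main obstacle is the cost bookkeeping rather than the structural argument: naively, $QB$ has degree $n$ and would cost $\mathcal{M}_F(n)$, so one must explicitly invoke the mid/short product idea to stay within $\mathcal{M}_F(n-m)$ and keep the leading constant at $4$. For the variant where an encryption of $\overleftarrow{B}^{-1} \bmod Z^{2^{\lceil \log m\rceil}}$ is provided as input, the first $\lceil \log m\rceil$ Newton iterations are skipped, so only $\lceil \log(n-m+1) \rceil - \lceil \log m \rceil$ doubling steps remain; the same final quotient and remainder steps then yield depth $2\lceil\log(n-m+1) - \log m\rceil + 1$.
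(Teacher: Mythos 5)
Your overall route is the same as the paper's: reverse the polynomials so that $\overleftarrow{Q}=\overleftarrow{A}\,\overleftarrow{B}^{-1}\bmod Z^{n-m+1}$, compute the power-series inverse of $\overleftarrow{B}$ (whose constant term is $1$ since $B$ is monic) by a depth-$2$-per-step Newton iteration on ciphertexts, recover the quotient, and finish with $\widetilde{R}=A-_F\widetilde{Q}\times_F\widetilde{B}\bmod Z^m$. The one place you diverge is the last doubling step: the paper stops the iteration at precision $2^L$ with $L=\lceil\log(n-m+1)\rceil-1$ and never forms $\overleftarrow{B}^{-1}\bmod Z^{n-m+1}$ explicitly, instead merging the final Newton step with the multiplication by $\overleftarrow{A}$ (it computes $\widetilde{S}=A\ltimes_F\widetilde{U}_L$ and a single ciphertext--ciphertext correction term $\widetilde{T}$, so that $\overleftarrow{\widetilde{Q}}=\widetilde{S}+_F\widetilde{T}Z^{2^L}$). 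You run the iteration to full precision and then recover the quotient by a separate cleartext--ciphertext product.

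That difference is what costs you the stated constant. Your last iteration is performed at precision $2^{\lceil\log(n-m+1)\rceil}$, which can be nearly $2(n-m)$, as two ciphertext--ciphertext products; that single step can already cost up to $4\mathcal{M}_F(n-m)$, and adding the geometric sum over the earlier steps and the final product $\widetilde{Q}\times_F\widetilde{B}$, your own accounting $\sum_i 2\mathcal{M}_F(2^{i+1})$ lands well above the claimed $4\mathcal{M}_F(n-m)+O(n)$. The paper keeps the constant at $4$ by using the Hanrot--Quercia--Zimmermann middle-product form of the iteration, where each step only produces the new top half, and by replacing the most expensive last step with the merged quotient computation. A second, smaller caveat: elsewhere (proof of the batched multi-point evaluation proposition) the paper counts a cleartext--ciphertext product as consuming one level of multiplicative depth; under that convention your quotient recovery $\overleftarrow{A}\ltimes_F G_k$ sits on the critical path and your total depth becomes $2\lceil\log(n-m+1)\rceil+2$, whereas in the paper's merged version the $\ltimes_F$ is off the critical path and the bound $2\lceil\log(n-m+1)\rceil+1$ holds under either convention. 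Your treatment of the variant with a precomputed encryption of $\overleftarrow{B}^{-1}\bmod Z^{2^{\lceil\log m\rceil}}$ matches the paper's.
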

\begin{proof}
The algorithm presented in~\cref{proof:modF} is an adaptation of the fast Euclidean division algorithm, based on Newton iteration~\cite{DBLP:books/daglib/0031325}, to the FHE settings. We remark that if an encryption of $\overleftarrow{B}^{-1}\mod Z^{2^{\lceil \log m\rceil}}$ is given, where $\overleftarrow{B}$ is the reverse polynomial, the first $\lceil \log m\rceil$ steps of the Newton iterations can be skipped, reducing the multiplicative depth by $2\lceil \log m\rceil$.
\end{proof}
\subsubsection{Fast Linearly Homomorphic Multi-point Evaluation.}\label{subs:LMEv}
We define a linearly homomorphic multi-point
polynomial evaluation algorithm, denoted by \textbf{L.MEv}, that
homomorphically evaluates a ciphertext polynomial of degree $m-1$ in
$m$ cleartext evaluation points.
For $(pk_L,sk_L)\leftarrow\textbf{L.Setup}(\kappa)$, a LHE key pair of
security $\kappa$, for $H$ a cleartext polynomial of degree $m-1$ and
for $m$ cleartext evaluation points $\lbrace y_i\rbrace_{i\in[m]}$, if
$\lbrace\widehat{e_i}\rbrace_{i\in[m]}\leftarrow\textbf{L.MEv}\left(\textbf{L.E}_{pk_L}(H),\lbrace{y_i}\rbrace_{i\in[m]}\right)$
then, this algorithm has to satisfy, for all $i\in[m]$, that:
$\textbf{L.D}_{sk}(\widehat{e_i})=H(y_i)$.
Naively, the algorithm \textbf{L.MEv} can be implemented in $O(m^2)$
operations, evaluating homomorphically the polynomial on each point
with an adaptation of the Horner scheme.
In~\cref{proof:LMEv} we instead derive
from~\cite{DBLP:conf/issac/BostanLS03} an asymptotically fast
multi-point evaluation algorithm that satisfies:
\begin{proposition}\label{prop:Lmultev}
Let $\widehat{H}$ be an LHE ciphertext polynomial of degree $m-1$ and
$\lbrace y_i\rbrace_{i\in[m]}$ be $m$ cleartext evaluation
points. With $\frac{1}{2}\mathcal{M}(m)\log m+ \widetilde{O}(m)$
operations of precomputation on the $y_i$,
$\textbf{L.MEv}(\widehat{H},\lbrace y_i\rbrace_{i\in[m]})$ can
be computed in at most $\mathcal{M}_L(m)\log m + \widetilde{O}(m)$
operations.
\end{proposition}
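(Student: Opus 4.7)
The plan is to adapt the classical subproduct-tree-based fast multi-point evaluation algorithm (Borodin--Moenck, refined in~\cite{DBLP:conf/issac/BostanLS03}) to the LHE setting. The key structural observation is that every evaluation point $y_i$ is a cleartext, so the entire subproduct tree and all auxiliary data needed for fast Euclidean division can be precomputed in the clear; only the dividend at each division step carries ciphered coefficients, and such ``one-sided'' divisions can be realized using $+_L$ and $\ltimes_L$ alone.

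Concretely, I would first construct the subproduct tree bottom-up: put $Z-y_j$ at the leaves, and at each higher level pair-multiply children to form a cleartext polynomial of twice their degree. Using super-additivity of $\mathcal{M}$, the aggregated cost is $\tfrac12\mathcal{M}(m)\log m$, the leading precomputation term. Alongside the tree, I would precompute via Newton iteration the truncated reverse-inverses $\overleftarrow{M}^{-1}\bmod Z^{\deg M}$ at each node, at an additional $\widetilde O(m)$ cost.

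For the evaluation phase, I would descend the tree top-down starting with $\widehat H$ at the root. At each internal node with cleartext modulus $M$ and current ciphered residue $\widehat{H_M}$, compute the two children's residues by fast Euclidean division of $\widehat{H_M}$ by the children's cleartext moduli. The reversal-based formula $\widehat R = \widehat A - B \ltimes_L \widehat Q$, with $\widehat Q$ obtained as a truncated cleartext--ciphertext product between $\overleftarrow{B}^{-1}$ and $\overleftarrow{\widehat A}$, uses only cleartext--ciphertext polynomial products, because both $B$ and its reverse-inverse are cleartext from the precomputation. Each division then costs $O(\mathcal{M}_L(\cdot))$ in the relevant child degree; summing by super-additivity of $\mathcal{M}_L$ over the $2^{\ell+1}$ divisions at level $\ell$ and over the $\log m$ levels of the tree yields $\mathcal{M}_L(m)\log m + \widetilde O(m)$ in total. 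At the leaves, $M_{\log m,j}=Z-y_j$, so the residue there is exactly a ciphertext of $H(y_j)$ as required.

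The hard part will be twofold. First, verifying that no ciphertext--ciphertext product ever arises during the fast division: this is exactly what the asymmetric ``dividend ciphered, divisor cleartext'' setup enables, and it is also what distinguishes this construction from the one behind~\cref{prop:Frem}, which had to resort to FHE. Second, tightening the constant: a naive fast division costs about two polynomial products (one to form $\widehat Q$, one for $B\ltimes_L\widehat Q$), and recovering the advertised $\mathcal{M}_L(m)\log m$ rather than $2\mathcal{M}_L(m)\log m$ requires the transposition principle of~\cite{DBLP:conf/issac/BostanLS03}, which replaces the back-multiplication step by a middle-product-style computation that aggregates cleanly across sibling divisions.
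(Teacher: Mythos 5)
Your construction is in the right family and correctly isolates the two points that make the statement work in the LHE setting (all moduli and inverses are cleartext, so only $\ltimes_L$ and $+_L$ are ever needed on ciphered data; the subproduct tree gives the $\tfrac12\mathcal{M}(m)\log m$ precomputation). But the algorithm you actually write out is the classical Borodin--Moenck remainder tree, which, as you yourself note, costs two cleartext--ciphertext products per division and therefore only yields $2\mathcal{M}_L(m)\log m$; the step that delivers the stated constant is deferred to the citation rather than carried out. That step is precisely what the paper's proof consists of: it never performs Euclidean divisions during the descent. Instead it computes a single root inversion $B:=\overleftarrow{P}_{(1/1)}^{-1}\bmod Z^m$ in clear, forms $\widehat{A}:=[\overleftarrow{B}\ltimes_L\widehat{H}]_{m-1}^{2m-1}$ once, and then descends the tree by the rule $\widehat{A}_{(i/2^k)}=[\overleftarrow{P}_{(\mathrm{sibling})}\ltimes_L\widehat{A}_{(\lceil i/2\rceil/2^{k-1})}]_{m/2^k}^{m/2^{k-1}}$, i.e.\ one middle product per node against the \emph{sibling's} cleartext subproduct polynomial, with correctness inherited from~\cite{DBLP:conf/issac/BostanLS03}. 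This removes both the quotient computation and the back-multiplication, gives exactly one $\ltimes_L$ of degree $O(m/2^k)$ per node at level $k$, hence $\mathcal{M}_L(m)\log m+\widetilde{O}(m)$ by super-additivity, and it also makes your per-node Newton inverses unnecessary (only the root inverse is used). So: your feasibility argument and cost accounting for the precomputation are sound and match the paper, but to actually prove the proposition as stated you must present the scaled-remainder-tree descent (or an equivalent transposed algorithm) explicitly, not just the division-based variant plus a pointer to where the factor of $2$ could be recovered.
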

\subsubsection{Protocol with $\textbf{mod}_F$ and $\textbf{L.MEv}$.}
Our second protocol makes use of two cryptosystems, a LHE and a FHE.
To ensure correctness, the schemes have to share the same cleartext
space. It is presented in~\cref{pro:modF} and
its correctness is implied by the correctness of
the encryption schemes and the fact that the remainder
$P_\mathcal{R}\mod P_\mathcal{S}$ vanishes only on the intersection
set.
\begin{protocol*}[ht]
\caption{Optimal communication volume UPSU protocol with $\textbf{mod}_F$ and $\textbf{L.MEv}$}\label{pro:modF}
\begin{tabular}{l c l}
{\large$\mathcal{R}$:} $\textbf{X}=\lbrace x_i\rbrace_{i\in[n]}$ & & {\large$\mathcal{S}$:} $\textbf{Y}=\lbrace y_i\rbrace_{i\in[m]}$\\ \hline\hline
&\Setup &\\
$\lbrace pk_F,sk_L,pk_L\rbrace$&$\xleftrightarrow{\hspace{3cm}}$&$\lbrace pk_L,sk_F,pk_F\rbrace$\\
\hdashline
&\Encode & $P_{\mathcal{S}}\leftarrow\prod(Z-y_i)$\\
&&$\widetilde{U_l}\leftarrow \overleftarrow{P_{\mathcal{S}}}^{-1}\mod Z^{2^l}$\\
&&$\widetilde{U_l}\leftarrow \textbf{F.E}_{pk_F}(U_l)$\\
&$\xleftarrow[\hspace{3cm}]{\widetilde{P_{\mathcal{S}}},\widetilde{U_l}}$&$\widetilde{P_{\mathcal{S}}}\leftarrow\textbf{F.E}_{pk_F}(P_{\mathcal{S}})$\\
\hdashline
$P_{\mathcal{R}}\leftarrow\prod(Z-x_i)$
&\Reduce &\\
$M \xleftarrow{\$}\mathbb{M}[Z]_{m-1}$&&\\
$\widehat{M}\leftarrow\textbf{L.E}_{pk_L}(M)$&&\\
$\widetilde{M}\leftarrow\textbf{F.E}_{pk_F}(M)$&&\\
$\widetilde{R}\leftarrow P_{\mathcal{R}} \textbf{mod}_F(\widetilde{P_{\mathcal{S}}})$&&\\
$\widetilde{H}\leftarrow \widetilde{R}+_F\widetilde{M}$&$\xrightarrow[\hspace{3cm}]{\widetilde{H},\ \widehat{M}}$&\\
\hdashline
&\Map &$H\leftarrow\textbf{F.D}_{sk_F}(\widetilde{H})$\\
&&$\lbrace h_i\rbrace_{i\in[m]} \leftarrow \textbf{MEv}(H,\textbf{Y})$\\
&&$\lbrace \widehat{m_i}\rbrace_{i\in[m]} \leftarrow \textbf{L.MEv}(\widehat{M},\textbf{Y})$\\
&&$\pi\xleftarrow{\$}\textfrak{S}_m $\\
&&$\widehat{e_{\pi(i)}}\leftarrow\widehat{h_i}-_L\widehat{m_i}$\\
&$\xleftarrow[\hspace{3cm}]{\left\{\left(\widehat{e_{\pi (i)}},\widehat{\eta_{\pi (i)}}\right)\right\}_{i\in[m]}}$&$\widehat{\eta_{\pi(i)}}\leftarrow y_i\ltimes_L\widehat{e_{\pi(i)}}$\\
\hdashline
$e_j\leftarrow\textbf{L.D}_{sk_L}\left(\widehat{e_j}\right)$
&\Union &\\
$\left\{\begin{array}{ll}
e_j=0 \Rightarrow& \perp\\
e_j\neq 0 \Rightarrow& \left\{\begin{array}{l}
\eta_j\leftarrow \textbf{L.D}_{sk_L}\left(\widehat{\eta_j}\right)\\
\textbf{X}\leftarrow\textbf{X}\cup\{ \eta_j e_j^{-1}\}\end{array}\right.
\end{array}\right.
$&&\\
\hline\hline
\textbf{Return} \textbf{X}&&\\
\end{tabular}
\end{protocol*}

The full simulation proof of the security of~\cref{pro:modF} is given
in~\cref{app:secuUPSU}.
The asymptotic presented in~\cref{tab:costUPSU2} are implied
by~\cref{prop:Lmultev} and~\cref{prop:Frem}. In particular, the sender
sends $\widetilde{P_{\mathcal{S}}}$ together with an encryption of
$\overleftarrow{P}_{\mathcal{S}}^{-1}\mod Z^{2^{\lceil\log{m}\rceil}}$
in its first message, to reduce the FHE multiplicative depth of
$P_{\mathcal{R}}~\textbf{mod}_F(\widetilde{P_{\mathcal{S}}})$, without
increasing the dominant term of the communication volume.
We observe that each
party has a quasi-linear arithmetic cost in its set size, while
conserving an optimal communication volume. Also, we can remark that
this protocol has a lower multiplicative depth
than~\cref{pro:para/batch} when $\sqrt{n}<m<n$, even if it is not our
target usage case.
\begin{theorem}\label{thm:UPSU2}
\Cref{pro:modF} is a secure UPSU under the honest-but-curious
adversary model. For the receiver and the server respectively owning
sets \textbf{X} and \textbf{Y} of $n$ and $m$ cleartexts, with the
assumption that $n> m$, it computes the set union with the asymptotic
complexity bounds presented in~\cref{tab:costUPSU2}.
\begin{table}[ht]
\centering
\caption{Cost analysis of~\cref{pro:modF} for $n> m$}\label{tab:costUPSU2}
\begin{tabular}{l||c|c|c|c|}
Algorithm&Ar. Cost for $\mathcal{R}$&Ar. Cost for $\mathcal{S}$&
Comm. Vol.& Depth\\
\hline\hline
\Setup &$O(1)$&$O(1)$&$O(1)$&\\
\hdashline
\Encode &$O(1)$&$\widetilde{O}(m)$&\boldmath$O(m)$&\\
\hdashline
\Reduce &\boldmath$4\mathcal{M}_F(n)+\widetilde{O}(n)$&$O(1)$&\boldmath$O(m)$&\boldmath$\begin{array}{l}
2(\lceil\log (n-m+1)\rceil\\
-\lceil \log m \rceil) +1
\end{array}$\\
\hdashline
\Map &$O(1)$&\boldmath$\begin{array}{l}\mathcal{M}_L(m)\log m\\ + \widetilde{O}(m)\end{array}$&\boldmath$O(m)$&\\
\hdashline
\Union &$O(m)$&$O(1)$&$O(1)$&\\
\hline\hline
\textbf{Total}&$4\mathcal{M}_F(n)+\widetilde{O}(n)$&$\begin{array}{l}\mathcal{M}_L(m)\log m\\ +\widetilde{O}(m)\end{array}$&$O(m)$&$\begin{array}{l}
2(\lceil\log (n-m+1)\rceil\\
-\lceil \log m \rceil) +1
\end{array}$\\
\end{tabular}
\end{table}
\end{theorem}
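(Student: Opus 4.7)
My plan is to verify in turn the three properties of an UPSU scheme --- correctness, privacy under honest-but-curious adversaries, and the complexity bounds summarised in~\cref{tab:costUPSU2} --- building on the primitives from~\cref{prop:Frem,prop:Lmultev}.

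For correctness, the key algebraic fact is that $P_\mathcal{S}(y_i)=0$ for every $y_i\in\textbf{Y}$, so with $R = P_\mathcal{R}\bmod P_\mathcal{S}$ one has $R(y_i)=P_\mathcal{R}(y_i)$. Tracing the protocol, additive homomorphy of FHE gives that $\widetilde{H}$ decrypts to $H=R+M$, so after classical multi-point evaluation the sender obtains $h_i = H(y_i) = P_\mathcal{R}(y_i)+M(y_i)$. Applying \textbf{L.MEv} to $\widehat{M}$ yields LHE ciphertexts $\widehat{m_i}$ whose plaintexts are $M(y_i)$, hence $\widehat{e_{\pi(i)}} = \textbf{L.E}_{pk_L}(h_i)-_L\widehat{m_i}$ decrypts to $P_\mathcal{R}(y_i)$, which vanishes precisely when $y_i\in\textbf{X}$. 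In the non-vanishing case $\eta_{\pi(i)}/e_{\pi(i)} = y_i$, so \Union adds to $\textbf{X}$ exactly the elements of $\textbf{Y}\setminus\textbf{X}$.

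For privacy, the full simulation argument is deferred to~\cref{app:secuUPSU}, but the plan rests on two ingredients. First, since $\deg R < \deg P_\mathcal{S} = m$ and $M$ is sampled uniformly in $\mathbb{M}[Z]_{m-1}$, the sum $H=R+M$ is uniform in $\mathbb{M}[Z]_{m-1}$ and therefore perfectly hides $R$ (and hence $P_\mathcal{R}$) from the sender after decryption of $\widetilde{H}$. Second, the messages $\widetilde{P_\mathcal{S}}$, $\widetilde{U_l}$ and $\widehat{M}$ are IND-CPA encryptions whose secret keys are held only by the opposite party, so a simulator may replace each by an encryption of zero and invoke semantic security. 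The receiver's view in \Map reduces, after LHE decryption and undoing $\pi$, to a randomly permuted list of pairs that are either $(0,0)$ for $y_i\in\textbf{X}$ or $(P_\mathcal{R}(y_i), y_i P_\mathcal{R}(y_i))$ for $y_i\in\textbf{Y}\setminus\textbf{X}$; a simulator knowing the functionality output can sample uniform non-zero values for the first coordinate in the latter case, reproducing a computationally indistinguishable transcript.

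The complexity bounds follow by accumulating costs across the five algorithms, the only non-trivial contributions being \Reduce on the receiver's side and \Map on the sender's side. In \Reduce, the receiver builds $P_\mathcal{R}$ in $\widetilde{O}(n)$ and then invokes $P_\mathcal{R}\,\textbf{mod}_F(\widetilde{P_\mathcal{S}})$: since the sender has pre-transmitted $\widetilde{U_l}$, \cref{prop:Frem} yields cost $4\mathcal{M}_F(n)+\widetilde{O}(n)$ at depth $2(\lceil\log(n-m+1)\rceil-\lceil\log m\rceil)+1$. In \Map, applying \cref{prop:Lmultev} to $\widehat{M}$ at the $m$ points $y_i$ costs $\mathcal{M}_L(m)\log m + \widetilde{O}(m)$, while the cleartext multi-point evaluation of $H$ and the $m$ LHE subtractions and scalar products together contribute only $\widetilde{O}(m)$. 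Summing reproduces~\cref{tab:costUPSU2}. The step I expect to require the most care is the depth analysis, making sure that the $\lceil\log m\rceil$ offset provided by $\widetilde{U_l}$ is correctly accounted for and that the monicity requirement of~\cref{prop:Frem} is indeed met by $P_\mathcal{S}$ throughout the homomorphic division.
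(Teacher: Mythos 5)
Your correctness and complexity analyses are sound and follow the same route as the paper: correctness rests on the identity $R(y_i)=P_\mathcal{R}(y_i)$ (since $P_\mathcal{S}(y_i)=0$), and the cost table is assembled from \cref{prop:Frem} (with the depth discount coming from the pre-transmitted $\widetilde{U_l}$) and \cref{prop:Lmultev}, exactly as the paper does. Your sender-side privacy argument --- uniformity of $H=R+M$ over $\mathbb{M}[Z]_{m-1}$, plus IND-CPA for the ciphertexts whose secret key the viewing party does not hold --- also matches the paper's simulator $S_2$ in \cref{app:secuUPSU}.

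The gap is in your receiver-side simulator. You propose to output, for each element of $\textbf{Y}\setminus\textbf{X}$, a pair whose first coordinate is a \emph{uniform non-zero} value $r$ and whose second coordinate is $y_i r$. But in the real execution the first coordinate is exactly $P_\mathcal{R}(y_i)$, and the receiver knows $\textbf{X}$ and hence $P_\mathcal{R}$: given a decrypted pair $(e,\eta)$ with $e\neq 0$ it can compute $y=\eta e^{-1}$ and test whether $P_\mathcal{R}(y)=e$. This test succeeds with probability $1$ on the real transcript but only with probability $1/(q-1)$ per pair on your simulated one, so the two distributions are trivially distinguishable. The fix --- and what the paper actually does --- is to have $S_1$ build a surrogate input set $\textbf{Z}$ of size $m$ consisting of the known elements of $(\textbf{X}\cup\textbf{Y})\setminus\textbf{X}$ padded with $|\textbf{X}\cap\textbf{Y}|$ elements drawn from $\textbf{X}$, and then output $\textbf{view}_1^\Pi(\textbf{X},\textbf{Z})$, i.e.\ run the honest protocol on $\textbf{Z}$; the pairs then come out as $\left(P_\mathcal{R}(z),\, z\,P_\mathcal{R}(z)\right)$ and the consistency relation holds by construction. (A minor additional point: the receiver cannot ``undo $\pi$'' --- the random permutation is precisely what hides which slot corresponds to which $y_i$ --- though this phrasing does not otherwise affect your argument.)
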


\subsection{Fast FHE Multi-point Evaluation with Compatibility LHE-FHE}
In fact, as long as the divisor is monic, we show in the following
that it is actually possible to perform a fast multi-point evaluation, based on successive
remainders, from a FHE Euclidean remainder.
This is the idea of the variant version presented in~\cref{pro:FMEv}.
This protocol allows a quasi-linear arithmetic cost for both parties,
in their respective set size and an optimal communication volume.
It is clearer to present it when the LHE and the FHE have the same
cleartext space, but we show in~\cref{subsec:compati} that this
condition is not mandatory.
The algorithm can be parallelized but batching is more difficult, and
its depth is twice that of~\cref{pro:para/batch}. In comparison to~\cref{pro:modF}, the cost for $\mathcal{S}$ is reduced, as it does not need to perform the homomorphic multi-point evaluation, but the multiplicative depth is increased by $2\log m$.
\subsubsection{Fast Fully Homomorphic Multi-point Evaluation.}
We thus now denote by \textbf{F.MEv}, a fast FHE multi-point
evaluation algorithm, that homomorphically evaluates a cleartext
polynomial in encrypted evaluation points.
For $(pk_F,sk_F)\leftarrow\textbf{F.Setup}(\kappa)$, a FHE key pair of
security $\kappa$, for a cleartext polynomial $A$ and for $m$
cleartext evaluation points $\lbrace y_i\rbrace_{i\in[m]}$, if
$\lbrace\widetilde{e_i}\rbrace_{i\in[m]}\leftarrow\textbf{F.MEv}(A,\lbrace\textbf{F.E}_{pk_F}(y_i)\rbrace_{i\in[m]})$
then, this algorithm has to satisfy, for all $i\in[m]$, that:
$\textbf{F.D}_{sk_F}(\widetilde{e_i})=A(y_i)$.
We show in~\cref{proof:FMEv} that it is possible to adapt a Newton
iterations and a fast multi-point algorithm to the fully homomorphic
context.
This allows us to get the following proposition:
\begin{proposition}\label{prop:FMEv} Let $A$ be of degree $n$ and
  $\lbrace\widetilde{y_i}\rbrace_{i\in[m]}$ be $m<n$
  ciphertexts. $\textbf{F.MEv}(A,\lbrace
  \widetilde{y_i}\rbrace_{i\in[m]})$ can be computed in less than
  $4\mathcal{M}_F(n-m)+O(n)$ arithmetic operations with a depth
  $2(L+l)$, for $L=\lceil \log(n-m+1)\rceil$ and
  $l=\lceil\log{m}\rceil$. If an encryption of
  $\overleftarrow{\prod\limits_m(Z-y_i)}^{-1}\mod Z^{2^l}$ is given,
  the depth is reduced to $2L$.
\end{proposition}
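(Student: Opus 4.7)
The plan is to adapt the classical subproduct-tree-based fast multi-point polynomial evaluation to the fully homomorphic setting by substituting the fast Euclidean remainder of \cref{prop:Frem} at each division step.

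First, I would homomorphically build the subproduct tree. Set $T_{0, i} = Z - \widetilde{y_i}$ for $i \in [m]$ and, for each internal node, $T_{j, i} = T_{j-1, 2i} \times_F T_{j-1, 2i+1}$, so that $T_{l, 0}$ is an encryption of $P_S = \prod_i (Z - y_i)$. This ascent uses $O(\mathcal{M}_F(m) \log m)$ operations and has depth $l$. Then compute the top remainder $\widetilde{R} = A \textbf{ mod}_F T_{l,0}$ via \cref{prop:Frem}, at cost $4\mathcal{M}_F(n - m) + O(n)$ and depth $2L + 1$; when an encryption of $\overleftarrow{P_S}^{-1} \bmod Z^{2^l}$ is supplied, the first $l$ Newton iteration steps can be skipped, reducing this depth by $2l$.

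Next, I would recursively descend the tree: for $j$ from $l-1$ down to $0$ and each $i$, set $\widetilde{R}_{j, i} = \widetilde{R}_{j+1, \lfloor i/2 \rfloor} \textbf{ mod}_F T_{j, i}$. Correctness follows from the tower of successive remainders together with the identity $A \bmod (Z - y_i) = A(y_i)$, so that each leaf $\widetilde{R}_{0, i}$ encrypts $A(y_i)$. The cost at level $j$ is $O(2^j \cdot \mathcal{M}_F(m/2^j)) = O(\mathcal{M}_F(m))$, summing to $O(\mathcal{M}_F(m) \log m) = O(n)$ in the unbalanced regime, so the dominant arithmetic term remains $4\mathcal{M}_F(n-m) + O(n)$.

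The main obstacle will be the depth bound. Applying \cref{prop:Frem} naively to each descent mod in isolation would sum to $\Theta(l^2)$ depth, exceeding the target. To achieve the claimed depth $2(L + l)$, and $2L$ when the root inverse is supplied, I would pipeline the Newton inverse computations alongside the subproduct tree ascent: incrementally compute the encrypted inverses $\overleftarrow{T_{j, i}}^{-1} \bmod Z^{2^j}$ by using the half-degree inverses from the previous level as initial approximations and advancing precision by a single Newton doubling step per level (depth $2$). This keeps the combined ascent-plus-inverse phase at depth $O(l)$, after which each descent remainder reuses a precomputed inverse and contributes $O(1)$ depth per level. Summing the ascent (depth $l$), the top remainder ($2L + 1$, or $2L - 2l + 1$ with the supplied inverse), and the descent then yields the claimed depth bound.
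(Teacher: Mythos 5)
Your proposal follows essentially the same route as the paper's proof: a homomorphic subproduct tree ascent, one fast FHE Euclidean remainder of $A$ at the root, a descent by successive remainders, and---the key point---Newton inverses of all tree nodes computed incrementally during the ascent (one doubling step per level) so the descent reuses them and the depth stays linear in $\log m$ rather than quadratic. Your final depth tally is slightly looser than the stated bound because you invoke \cref{prop:Frem} as a black box for the root remainder, whereas the paper seeds that Newton iteration with the level-$l$ inverse already produced by the pipelined ascent; the construction and cost accounting are otherwise identical.
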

\begin{proof}
The algorithm presented in~\cref{proof:FMEv} is a modification of the
Newton iterations and a multi-point algorithm based on successive
Euclidean remainders adapted in the FHE context. Similarly
to~\cref{pro:modF}, having an encryption of
$\overleftarrow{\prod\limits_m(Z-y_i)}^{-1}\mod Z^{2^l}$ allows to
skip $l$ steps of the Newton iterations, and reduce the multiplicative
depth by~$2l$.
\end{proof}
\subsubsection{Protocol with \textbf{F.MEv}.}
This allows us to present~\cref{pro:FMEv} whose characteristics are
given in~\cref{thm:UPSU3}.

\begin{protocol*}[ht]
\caption{Optimal communication and arithmetic cost with compatibility LHE-FHE protocol}\label{pro:FMEv}
\begin{tabular}{l c l}
{\large$\mathcal{R}$:} $\textbf{X}=\lbrace x_i\rbrace_{i\in[n]}$ & & {\large$\mathcal{S}$:} $\textbf{Y}=\lbrace y_i\rbrace_{i\in[m]}$\\ \hline\hline
&\Setup &\\
$\lbrace pk_F,sk_L,pk_L\rbrace$&$\xleftrightarrow{\hspace{3cm}}$&$\lbrace pk_L,sk_F,pk_F\rbrace$\\
\hdashline
&\Encode & \\
&&$P_{\mathcal{S}}\leftarrow\prod(Z-y_i)$\\
&&$\widetilde{U_l}\leftarrow \overleftarrow{P_{\mathcal{S}}}^{-1}\mod Z^{2^l}$\\
&&$\widetilde{U_l}\leftarrow \textbf{F.E}_{pk_F}(U_l)$\\
&$\xleftarrow[\hspace{3cm}]{\lbrace\widetilde{y_i}\rbrace_{i\in[m]},\widetilde{U_l}}$&$\widetilde{y_i}\leftarrow \textbf{F.E}_{pk_F}(y_i)$\\
\hdashline
$P_{\mathcal{R}}\leftarrow\prod(Z-x_i)$
&\Reduce &\\
$k_i \xleftarrow{\$}\mathbb{M}$&&\\
$\widehat{k_i}\leftarrow\textbf{L.E}_{pk_L}(k_i)$&&\\
$\widetilde{k_i}\leftarrow\textbf{F.E}_{pk_F}(k_i)$&&\\
$\lbrace\widetilde{e_i}\rbrace_{i\in[m]}\leftarrow \textbf{F.MEv}(P_{\mathcal{R}},\lbrace \widetilde{y_i}\rbrace_{i\in[m]})$&&\\
$\widetilde{h_i}\leftarrow \widetilde{k_i}+_F\widetilde{e_i}$&$\xrightarrow[\hspace{3cm}]{\lbrace\widetilde{h_i}\rbrace_{i\in[m]},\ \lbrace\widehat{k_i}\rbrace_{i\in[m]}}$&\\
\hdashline
&\Map &$h_i\leftarrow\textbf{F.D}_{sk_F}(\widetilde{h_i})$\\
&&$\pi\xleftarrow{\$}\textfrak{S}_m $\\
&&$\widehat{e_{\pi(i)}}\leftarrow\widehat{h_i}-_L\widehat{m_i}$\\
&$\xleftarrow[\hspace{3cm}]{\left\{\left(\widehat{e_{\pi (i)}},\widehat{\eta_{\pi (i)}}\right)\right\}_{i\in[m]}}$&$\widehat{\eta_{\pi(i)}}\leftarrow y_i\ltimes_L\widehat{e_{\pi(i)}}$\\
\hdashline
$e_j\leftarrow\textbf{L.D}_{sk_L}\left(\widehat{e_j}\right)$
&\Union &\\
$\left\{\begin{array}{ll}
e_j=0 \Rightarrow& \perp\\
e_j\neq 0 \Rightarrow& \left\{\begin{array}{l}
\eta_j\leftarrow \textbf{L.D}_{sk_L}\left(\widehat{\eta_j}\right)\\
\textbf{X}\leftarrow\textbf{X}\cup\{ \eta_j e_j^{-1}\}\end{array}\right.
\end{array}\right.
$&&\\
\hline\hline
\textbf{Return} \textbf{X}&&\\
\end{tabular}
\end{protocol*}

The correctness proof of~\cref{pro:FMEv} follows the line of that
of~\cref{pro:para/batch} and the simulator is given
in~\cref{app:secuUPSU}.
Its asymptotic cost is presented in~\cref{tab:costUPSU3} and is
implied by~\cref{prop:FMEv}. In particular, the sender sends
$\widetilde{P_{\mathcal{S}}}$ together with an encryption of
$\overleftarrow{\prod\limits_m(Z-y_i)}^{-1}\mod{Z}^{2^{\lceil\log{m}\rceil}}$
in its first message.
This reduces the FHE multiplicative depth of
$\textbf{F.MEv}(P_{\mathcal{R}},\lbrace\widetilde{y_i}\rbrace_{i\in[m]})$,
without increasing the dominant term of the communication volume.
We have proven:
\begin{theorem}\label{thm:UPSU3}
\Cref{pro:FMEv} is a secure UPSU under the honest-but-curious adversary model. For the receiver and the server respectively owning sets \textbf{X} and \textbf{Y} of $n$ and $m$ cleartexts, with the assumption that $n> m$, it computes the set union with the asymptotic complexity bounds presented in~\cref{tab:costUPSU3}.
\begin{table}[ht]
\centering
\caption{Cost analysis of~\cref{pro:FMEv} for $n> m$}\label{tab:costUPSU3}
\begin{tabular}{l||c|c|c|c|}

Algorithm&Ar. Cost for $\mathcal{R}$&Ar. Cost for $\mathcal{S}$&
Comm. Vol.& Depth\\
\hline\hline
\Setup &$O(1)$&$O(1)$&$O(1)$&\\
\hdashline
\Encode &$O(1)$&\boldmath$\widetilde{O}(m)$&\boldmath$O(m)$&\\
\hdashline
\Reduce &\boldmath$4\mathcal{M}_F(n)+O(n)$&$O(1)$&\boldmath$O(m)$&\boldmath$2\log(n-m+1)$\\
\hdashline
\Map &$O(1)$&$O(m)$&\boldmath$O(m)$&\\
\hdashline
\Union &$O(m)$&$O(1)$&$O(1)$&\\
\hline\hline
\textbf{Total}&$4\mathcal{M}_F(n)+O(n)$&$\widetilde{O}(m)$&$O(m)$&$2\log(n-m+1)$\\
\end{tabular}
\end{table}
\end{theorem}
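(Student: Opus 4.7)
The plan is to verify the three defining UPSU properties in turn --- correctness, privacy under honest-but-curious adversaries, and the complexity bounds from~\cref{tab:costUPSU3}. Since \cref{pro:FMEv} reuses the mask-and-unmask skeleton of \cref{pro:para/batch} with \textbf{F.BSMEv} replaced by \textbf{F.MEv} (\cref{prop:FMEv}), much of the argument is to check that the substitution preserves both functionality and privacy, and to plug \cref{prop:FMEv} into the cost analysis.

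For correctness, assuming for clarity that the LHE and FHE share the same plaintext field (the general case being handled in~\cref{subsec:compati}), I would trace plaintexts through each step. \Cref{prop:FMEv} ensures that $\widetilde{e_i}$ encrypts $e_i = P_{\mathcal{R}}(y_i)$; then $\widetilde{h_i}$ encrypts $k_i + e_i$, and after FHE decryption the sender's LHE subtraction $\widehat{e_{\pi(i)}} \leftarrow \textbf{L.E}_{pk_L}(h_i) -_L \widehat{k_i}$ yields an LHE encryption of $e_i$, while $\widehat{\eta_{\pi(i)}}$ encrypts $y_i e_i$. On the receiver's side, $e_j = 0$ exactly when the corresponding $y_i$ is a root of $P_{\mathcal{R}}$, i.e.~$y_i \in \mathbf{X}$; otherwise $\eta_j e_j^{-1} = y_i$ is adjoined to $\mathbf{Z}$, yielding $\mathbf{Z} = \mathbf{X} \cup \mathbf{Y}$. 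For the complexity bounds, summing per-algorithm contributions is routine: \Encode is dominated by computing $P_{\mathcal{S}}$ via a subproduct tree, inverting its reverse modulo $Z^{2^l}$ by Newton iteration, and $O(m)$ FHE encryptions, giving $\widetilde{O}(m)$ operations and $O(m)$ ciphertexts communicated; \Reduce is dominated by the call to \textbf{F.MEv} with $\widetilde{U_l}$ pre-supplied, hence $4\mathcal{M}_F(n-m) + O(n)$ operations at depth $2\lceil\log(n-m+1)\rceil$ by \cref{prop:FMEv}; \Map and \Union each contribute $O(m)$; and the protocol uses three rounds.

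The security proof is where the real work sits. I would construct a PPT simulator for each corrupted party, with indistinguishability following from IND-CPA security of both schemes together with flooding of outgoing FHE ciphertexts to ensure circuit privacy. For a corrupted sender, the messages received in \Reduce, namely $\{\widetilde{h_i}\}$ and $\{\widehat{k_i}\}$, are simulated by drawing fresh uniform $k'_i$ and outputting $\textbf{L.E}_{pk_L}(k'_i)$ together with $\textbf{F.E}_{pk_F}(k'_i)$: the real $\widetilde{h_i}$ is indistinguishable because $P_{\mathcal{R}}(y_i)$ is perfectly masked by the uniform $k_i$ before flooding, and the LHE ciphertexts hide $k_i$ by IND-CPA. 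For a corrupted receiver, the simulator queries the ideal functionality to obtain $\mathbf{Y}\setminus\mathbf{X}$, picks arbitrary fillers for the intersection, and produces the expected $\{\widetilde{y_i}\}$, $\widetilde{U_l}$ and post-\Map pairs $\{(\widehat{e_{\pi(i)}}, \widehat{\eta_{\pi(i)}})\}$ as fresh encryptions of $(P_{\mathcal{R}}(y), y P_{\mathcal{R}}(y))$ for $y \in \mathbf{Y}\setminus\mathbf{X}$ and $(0,0)$ for intersection fillers, composed with a uniform random permutation. The main obstacle I foresee is arguing that $\widetilde{h_i}$, which the sender actually decrypts and thus inspects in a potentially noise-revealing form, does not leak the structure of the preceding \textbf{F.MEv} circuit: this requires making the \textbf{flood} step from~\cref{sec:blocks} explicit on $\widetilde{h_i}$ so that its distribution is statistically close to a fresh encryption of $k_i + P_{\mathcal{R}}(y_i)$, after which the uniform-mask argument goes through and composing with the analogous argument from~\cref{thm:UPSU1} closes the simulation.
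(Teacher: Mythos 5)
Your proposal is correct and follows essentially the same route as the paper: correctness by tracing the evaluations of $P_{\mathcal{R}}$ through the mask/unmask steps, the cost table by plugging in \cref{prop:FMEv} (with the depth reduction coming from the sender transmitting the encrypted inverse $\widetilde{U_l}$), and a simulation-based security argument in which the sender's view is simulated with uniform values plus IND-CPA and the receiver's view is simulated using $|\mathbf{X}\cap\mathbf{Y}|$ filler elements for the intersection (the paper's simulator simply re-runs the protocol on a surrogate set $\mathbf{Z}$ built from those fillers, which induces the same distribution as your fresh-encryption construction). Your remark that \textbf{flood} must be applied to $\widetilde{h_i}$ before it is returned to the holder of $sk_F$ is well taken, since the paper only writes that step explicitly in \cref{pro:para/batch}.
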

\begin{remark}
According to~\cref{prop:FMEv}, the sender's arithmetic cost can be reduced to be linear in its set size, but this would lead to an increase of the multiplicative depth by $2\log m$.
\end{remark}

\section{Conclusion}

We have presented two new protocols for the UPSU problem, which are the
first whose computational and communication costs for the sender depend
only on the size of the sender's set. Asymptotically, we have achieved
the optimal or nearly-optimal computational and communication costs.

However, our protocols are still largely of theoretic interest and
further work remains to make them viable in practice.
Our preliminary experiments for the first protocol validate our
asymptotics and show that the sender
computation time and communication volume are competitive with other
recent UPSU work. However, we would need to scale this to larger sizes
(particularly larger \emph{receiver} set sizes) to demonstrate the
practical advantage over alternative approaches.

This scaling is hindered by the higher receiver computational cost in our first
protocol. Our second protocol has a much better (quasi-optimal)
computational cost, but due to the somewhat more sophisticated algorihms
which make use of both FHE and LHE
cryptosystems in a delicate interaction, developing a practical
implementation of the second protocol remains as future work.


\bibliographystyle{acm}
\bibliography{bibfile}

\appendix
\crefalias{section}{appendix}
\section{About the Compatibility FHE-LHE}\label{subsec:compati}
In this paper, our~\cref{pro:para/batch,pro:modF,pro:FMEv} combine FHE
and LHE schemes. But, depending on the instantiation of both schemes,
it is not always straightforward to preserve the overall correctness:
denoting $\mathbb{M}_L$ and $\mathbb{M}_F$ the respective cleartext
spaces of the LHE and FHE schemes, we have presented our protocols
when the message spaces are both equal to the same finite field,
$\mathbb{M}_F=\mathbb{M}_L=\mathbb{F}_{q}$. In this case, there are no
compatibility issues and this is always possible as an FHE is by
definition also an LHE.

Another situation is when $\mathbb{M}_F=\mathbb{F}_{q}$ and
$\mathbb{M}_L=\mathbb{Z}_N$, for $q^2<N$:
for instance this is the case if BGV is used for the FHE and Paillier
cryptosystem is used for the LHE. We show next that our~\cref{pro:para/batch,pro:FMEv} can
benefit from such an instantiation.


Indeed, more precisely, in BGV,
$\mathbb{F}_q=\mathbb{F}_{p^k}\simeq\frac{\mathbb{F}_p[X]}{(T(X))}$
for $T(X)$ a polynomial of degree $k$, irreducible on
$\mathbb{F}_p$. Then, for a chosen $T(X)$, an element
$\alpha\in\mathbb{F}_q$ is uniquely represented as a polynomial
$\sum\limits_{i=0}^{k-1}a_iX^i\in\frac{\mathbb{F}_p[X]}{(T(X))}$,
where $a_i\in\mathbb{F}_p$ for all $i$. By viewing this polynomial as
a polynomial in $\mathbb{Z}[X]$, we can introduce the one-to-one
correspondence $\Psi$ between $\mathbb{F}_q$ and $[0,q)$ that maps
$\alpha\in\mathbb{F}_q$ to
$\sum\limits_{i=0}^{k-1}a_ip^i\in[0,q)$. With the condition $q^2<N$, an
element $\alpha\in\mathbb{F}_q$ is uniquely represented in
$\mathbb{Z}_N$ with $\Psi(\alpha)\in[0,q)\subset[0,N)$. Note that
$\Psi(0)=0$ but $\Psi$ does not preserve the arithmetic.
 In~\cref{pro:para/batch,pro:FMEv}, the receiver sends $\widetilde{\alpha}$, an FHE encryption
of $\alpha$, and $\widehat{\beta}$, an LHE encryption of
$\Psi(\beta)$, to the sender. The sender decrypts
$\widetilde{\alpha}$, and computes and sends in fact
$\widehat{e}\leftarrow\widehat{\alpha}-_L\widehat{\beta}$ as well as
$\widehat{\eta}\leftarrow\Psi(y)\ltimes_L\widehat{e}$ and
$\widehat{\nu}\leftarrow\Psi(y)\ltimes_L(\widehat{0}-_L\widehat{e})$
(where $\widehat{\alpha}$ is an encryption of $\Psi(\alpha)$).
Therefore, after decryption, if $\alpha=\beta$, all three of
$e,\eta$ and $\nu$ are $0$, as expected.
The issue is when $\alpha\neq\beta$.
There are two cases, depending whether a modular reduction was needed
in $e$ when subtracting homomorphically
$\Psi(\alpha)\in[0,q)$ to $\Psi(\beta)\in[0,q)$:
\begin{itemize}
\item If $e\in[0,q)$, then $\eta\in[0,q^2)$ and just dividing $\eta$ by $e$
over the rationals recovers $\Psi(y)=\eta{}e^{-1}=\Psi(y)ee^{-1}\in\mathbb{Q}$.
\item If now $e\in[N-q,N)$, then in fact $\nu=\Psi(y)(N-e)\in[0,q^2)$
  and we need to divide over the rationals by $N-e$ instead
  to recover $\Psi(y)=\Psi(y)(N-e)(N-e)^{-1}\in\mathbb{Q}$.
\end{itemize}
The actual case is easily decided in view if the deciphering of
$\widehat{e}$. Overall, we have shown that even in this
BGV/Paillier-like situation, the receiver can compute back
$y=\Psi^{-1}\left(\Psi(y)\right)\in\mathbb{F}_q$ if and only if
$\alpha\neq\beta$,
that is if and only if $P(y)\neq 0$.\\

For some other situations, an oblivious
transfer~\cite{DBLP:conf/soda/NaorP01} could be used
to end the protocol instead, at the cost of an increase of the number of rounds.


\section{Security Proofs}\label{app:secuUPSU}
We simultaneously do the simulation proof
for~\cref{pro:para/batch,pro:modF,pro:FMEv}, as their constructions
are close.
We also explicit the proof in the situation were the plaintext spaces for FHE
and LHE are the same finite field $\mathbb{M}:=\mathbb{F}_q$
(in the BGV/Paillier-like situation, otherwise, the analysis
of~\cref{subsec:compati} shows that the security is preserved).

We assume that both the FHE and LHE schemes used are IND-CPA secure.
In the following, the receiver $\mathcal{R}$ and the sender
$\mathcal{S}$ are respectively called party $1$ and $2$, while the
UPSU protocol is denoted $\Pi$.
The protocol has $3$ rounds: $\mathcal{R}$ receives $2$ messages $M_1$
and $M_3$ while $\mathcal{S}$ receives only $M_2$. The semantic
functionality is
$f:\mathcal{P}(\mathbb{M})\times\mathcal{P}(\mathbb{M})\rightarrow
\mathcal{P}(\mathbb{M})\times\mathcal{P}(\mathbb{M})$ where
$\mathcal{P}$ denotes the power set and $\mathbb{M}_F$ is the FHE
cleartext space. 

The ideal output-pair is
$f(\textbf{X},\textbf{Y})=(f_1(\textbf{X},\textbf{Y}),f_2(\textbf{X},\textbf{Y}))=((\textbf{X}\cup\textbf{Y}),\emptyset)$.
As the protocol is parametrized with (upper bounds on) the set sizes
$|\textbf{X}|$ and $|\textbf{Y}|$, those are implicitly revealed to
both parties. The following views are reduced to the minimal set that
could trivially imply the real view; for example, if the real view
have a clear polynomial $R$, a key $pk_F$ and a ciphertext
$\textbf{F.E}_{pk_F}(R)$, we omit $\textbf{F.E}_{pk}(R)$ in the view,
because if we can simulate both $R$ and $pk_F$, it is trivial to
simulate $\textbf{F.E}_{pk}(R)$. As we are proving the security of
three protocols, we will use the notation
$\left[\begin{array}{l}a\\b\\c\end{array}\right.$ to describe the
view, where $a$ is related to~\cref{pro:para/batch}, $b$
to~\cref{pro:modF} and $c$ to~\cref{pro:FMEv}. 

In the following, we
consider that the \Setup algorithm has already been completed:
it is just an exchange of two public keys $pk_L$ and $pk_F$,
respectively for the LHE and FHE scheme.
Now, the views and the outputs of each parties are:
\begin{itemize}
\item $\textbf{view}_1^\Pi(\textbf{X},\textbf{Y})=(\textbf{X},C_1,M_1,M_3)$ such that:
\begin{equation*}
C_1=\bigg\lbrace pk_L,sk_L,pk_F,\left[\begin{array}{l}\lbrace k_i\rbrace_{i\in[m]}\\M\\\lbrace k_i\rbrace_{i\in[m]}\end{array}\right.,\left[\begin{array}{l}\widetilde{\textbf{e}}\\ \widetilde{R} \\\lbrace \widetilde{e_i}\rbrace_{i\in[m]}\end{array}\right.,\lbrace e_{\pi(i)},\eta_{\pi(i)}\rbrace_{i\in[m]}\bigg\rbrace,
\end{equation*}
where $\left[\begin{array}{l}\widetilde{\textbf{e}}\\ \widetilde{R} \\\lbrace \widetilde{e_i}\rbrace_{i\in[m]}\end{array}\right.\leftarrow\left[\begin{array}{l}\textbf{F.BSMEv}(P_{\mathcal{R}},\widetilde{\textbf{y}})\\ P_{\mathcal{R}}\textbf{ mod}_F\widetilde{P_{\mathcal{S}}} \\\textbf{F.MEv}(P_\mathcal{R},\lbrace \widetilde{y_i}\rbrace_{i\in[m]})\end{array}\right.$, and for all $j\in[m]$, $e_j\leftarrow\textbf{L.D}_{sk_L}(\widehat{e_j})$, $\eta_j\leftarrow\textbf{L.D}_{sk_L}(\widehat{\eta_j})$. The content of the first message is:
\begin{equation*} M_1=\left\lbrace
\left[\begin{array}{l}\widetilde{\textbf{y}}\\\widetilde{P_\mathcal{S}},\widetilde{U_l}\\\lbrace \widetilde{y_i}\rbrace_{i\in[m]},\widetilde{U_l}\end{array}\right.\right\rbrace,
\end{equation*}
while the content of the second message is
\begin{equation*}
M_3=\left\lbrace \lbrace \widehat{e_{\pi(i)}},\widehat{\eta_{\pi(i)}}\rbrace_{i\in[m]}\right\rbrace,
\end{equation*}
for $\widehat{e_{\pi(i)}}\leftarrow\left[\begin{array}{l}\textbf{L.E}_{pk_L}(P_\mathcal{R}(y_i))\\ \textbf{L.E}_{pk_L}((P_\mathcal{R} \mod P_\mathcal{S})(y_i))\\ \textbf{L.E}_{pk_L}(P_\mathcal{R}(y_i))\end{array}\right.$ and $\widehat{\eta_{\pi(i)}}\leftarrow y_i\ltimes_L\widehat{e_{\pi(i)}}$.
\item $\textbf{view}_2^\Pi(\textbf{X},\textbf{Y})=\left(\textbf{Y},C_2,M_2\right)$ such that:
\begin{align*} C_2&=\left\lbrace
pk_F,sk_F,pk_L,\left[\begin{array}{l}\langle h_i\rangle_{i\in[m]}\\\lbrace H(y_i)\rbrace_{i\in[m]}\\\lbrace h_i\rbrace_{i\in[m]}\end{array}\right.,\lbrace \widehat{e_i}\rbrace_{i\in[m]},\pi\right\rbrace,\\
M_2&=\left\lbrace \left[\begin{array}{l}\widetilde{\textbf{h}}\\\widetilde{H}\\\lbrace \widetilde{h_i}\rbrace_{i\in[m]}\end{array}\right.,\left[\begin{array}{l}\lbrace \widehat{k_i}\rbrace_{i\in[m]}\\\widehat{M}\\\lbrace\widehat{k_i}\rbrace_{i\in[m]}\end{array}\right. \right\rbrace.
\end{align*}
\item $\textbf{output}_1^\Pi(\textbf{X},\textbf{Y})=\left(\textbf{X}\cup\left\lbrace y_{i}\in\textbf{Y}\vert P_{\mathcal{R}}(y_i)\neq 0\right\rbrace\right)$
\item $\textbf{output}_2^\Pi(\textbf{X},\textbf{Y})=\emptyset$
\end{itemize}

On the side of $\mathcal{S}$, a probabilistic polynomial-time algorithm $S_2$, taking as input the set $\textbf{Y}$, simulates $\textbf{view}_2^\Pi(\textbf{X},\textbf{Y})$ with the following tuple.
\begin{align*}
S_2(\textbf{Y},\emptyset)=\bigg(& \textbf{Y},\left\lbrace pk_F,sk_F,pk_L,\left[\begin{array}{l}\langle r_i\rangle_{i\in[m]}\\\lbrace R(y_i)\rbrace_{i\in[m]}\\\lbrace r_i\rbrace_{i\in[m]}\end{array}\right.,\left[\begin{array}{l}\lbrace \widehat{r_i}-_L\widehat{r_i'}\rbrace_{i\in[m]}\\\lbrace \widehat{R(y_i)}-_L\widehat{R'(y_i)}\rbrace_{i\in[m]}\\\lbrace \widehat{r_i}-_L\widehat{r_i'}\rbrace_{i\in[m]}\end{array}\right.,\pi' \right\rbrace,\\
& \left\lbrace \left[\begin{array}{l}\textbf{flood}(\textbf{F.E}_{pk_F}(\langle r_i\rangle_{i\in[m]}))\\\textbf{F.E}_{pk_F}(R)\\\lbrace \textbf{F.E}_{pk_F}(r_i)\rbrace_{i\in[m]}\end{array}\right. ,\left[\begin{array}{l}\lbrace \widehat{r_i'}\rbrace_{i\in[m]}\\\widehat{R'}\\\lbrace\widehat{r_i'}\rbrace_{i\in[m]}\end{array}\right. \right\rbrace \bigg),
\end{align*}
where $m=\vert \textbf{Y}\vert$, $pk_F,sk_F,pk_L$ are obtained from the \Setup algorithm, $r_i,r_i'\xleftarrow{\$}\mathbb{M}$ for all $i\in[m]$, $R,R'\xleftarrow{\$}\mathbb{M}[X]$ of degrees $m-1$, for $\widehat{r_i},\widehat{r_i'},\widehat{R},\widehat{R'}$ their LHE encryptions with $pk_L$, and $\pi'\xleftarrow{\$}\mathfrak{S}_{m}$.

In the protocol, $\lbrace h_i\rbrace_{i\in[m]}$ (resp. $H$) are masked
values and are thus indistinguishable from the random $\lbrace
r_i\rbrace_{i\in[m]}$ (resp. $R$).
Note than in~\cref{pro:para/batch}, the $\lbrace h_i\rbrace_{i\in[m]}$
are sent encrypted under FHE, and $\mathcal{S}$ has the decryption
key.
The algorithm \textbf{flood} then ensures privacy (and circuit
privacy):
both the simulated and the real
ciphertexts have the same lowest possible size and their noise are
flooded.
The $\lbrace k_i\rbrace_{i\in[m]}$ (resp. $M$) and $\pi$ are random
values, thus easily simulated with random values $\lbrace
r_i'\rbrace_{i\in[m]}$ (resp. $R'$) and $\pi'$.
By performing the operations from the protocol to the simulated
values, we obtain, for every subsets $\textbf{X},\textbf{Y}\subset\mathbb{M}$:
\begin{equation*} \lbrace
S_2(\textbf{Y},\emptyset),((\textbf{X}\cup\textbf{Y}),\emptyset)\rbrace\overset{c}{\equiv}\lbrace
\textbf{view}_2^\Pi(\textbf{X},\textbf{Y}),\textbf{output}^\Pi(\textbf{X},\textbf{Y})\rbrace.
\end{equation*}

On the side of $\mathcal{R}$, a probabilistic polynomial-time
algorithm $S_1$ taking as input the set $\textbf{X}$ and
$(\textbf{X}\cup\textbf{Y})$ first computes a set $\textbf{Z}=\lbrace
z_i\rbrace_{i\in[m]}\subset\mathbb{M}$, with $m=|\textbf{Y}|$, as
following. Exactly
$|\textbf{Y}|-(|\textbf{X}\cup\textbf{Y}|-|\textbf{X}|)=\vert
\textbf{X}\cap\textbf{Y}\vert$ of the $z_i$ (uniformly distributed)
are randomly picked in $\textbf{X}$, and the others
($|\textbf{X}\cup\textbf{Y}|-|\textbf{X}|$) are the cleartexts in
$\textbf{X}\cup\textbf{Y}\setminus \textbf{X}$. The set $\textbf{Z}$
is indistinguishable from the set $\textbf{Y}$. Indeed, the two sets
have the same size, and
$\textbf{X}\cup\textbf{Y}=\textbf{X}\cup\textbf{Z}$ by construction;
in particular,
$|\textbf{X}\cap\textbf{Y}|=|\textbf{X}\cap\textbf{Z}|$. All, the
information given by the sender to the receiver on the set
$\textbf{X}\cap\textbf{Y}$ are sent encrypted under FHE in the
protocol, and the receiver does not have the decryption key.
This implies, under the IND-CPA security of the scheme,
that taking any subset of \textbf{X} with size
$|\textbf{X}\cap\textbf{Y}|$ can lead to an indistinguishable
simulation.
Then, as the sender's input set has been properly simulated, $S_1$
simulates $\textbf{view}_1^\Pi(\textbf{X},\textbf{Y})$ this way:
\begin{equation*}
S_1(\textbf{X},(\textbf{X}\cup\textbf{Y}))=\textbf{view}_1^\Pi(\textbf{X},\textbf{Z})
\end{equation*}
Overall, we obtain for every subsets $\textbf{X},\textbf{Y}\subset\mathbb{M}$:
\begin{equation*} \lbrace
S_1(\textbf{X},(\textbf{X}\cup\textbf{Y})),((\textbf{X}\cup\textbf{Y}),\emptyset)
\rbrace\overset{c}{\equiv}\lbrace
\textbf{view}_1^\Pi(\textbf{X},\textbf{Y}),\textbf{output}^\Pi(\textbf{X},\textbf{Y})\rbrace.
\end{equation*}
\section{Efficient Homomorphic Algorithm Constructions}
In the following algorithms, for a polynomial $A=\sum_{i=0}^d a_iZ^i$,
we denote by $[A]_l^L:=\sum_{i=l}^La_iZ^{i-l}$,
and $\overleftarrow{A}:=\sum_{i=0}^d a_{d-i}Z^i$.
\subsection{FHE Euclidean Remainder}\label{proof:modF}
We recall the Newton-iteration-based algorithm for polynomial Euclidean
division. We present the fast version of~\cite{DBLP:journals/aaecc/HanrotQZ04}.
The remainder $R$ in the division of $A$ by a monic $B$, of respective degrees $n$ and $m<n$,
is the unique polynomial satisfying $A=BQ+R$ with $\deg(R) < m$. This
implies, by reversing the polynomial coefficients,
$\overleftarrow{A}=\overleftarrow{Q}\overleftarrow{B}+Z^{n-m+1}\overleftarrow{R}$, whence
\begin{equation}
\overleftarrow{Q}=\overleftarrow{A}\overleftarrow{B}^{-1}\mod
Z^{n-m+1}.
\end{equation}
The goal is to homomorphically compute the inverse
of $\overleftarrow{B}$ modulo $Z^{n-m+1}$, using
Newton iteration. Let $\widetilde{A}$ and $\widetilde{B}$ be encryptions
of $A$ and $B$, and $\widetilde{1}$ be an encryption of $1$ with the same public key.
The algorithm requires first to compute the coefficient $\widetilde{U_L}$, for $L=\lceil \log (n-m+1)\rceil -1$, of the sequence $(\widetilde{U})$:
\begin{equation}\label{eq:F.Newt}
(\widetilde{U})=\left \{
\begin{array}{lcl} \widetilde{U_0}=\widetilde{1}\\ \widetilde{U_{k+1}}=\widetilde{U_k}\times_F
\left(\widetilde{1} -_F \left[\overleftarrow{\widetilde{B}}\times_F
\widetilde{U_k}\right]_{2^k}^{2^{k+1}-1}Z^{2^k}\right)\mod Z^{2^{k+1}}
\end{array} \right.
\end{equation} Now, instead of computing the last step of the
sequence that would give us homomorphically the inverse polynomial of
$\overleftarrow{B}\mod Z^{n-m+1}$, we directly compute homomorphically the quotient, using $\widetilde{U_L}$.
\begin{align} \widetilde{S} =&\ A\ltimes_F \widetilde{U}_L \mod Z^{n-m+1}\text{, and}\\
    \widetilde{T}=&\ \left[\overleftarrow{\widetilde{B}}\times_F\widetilde{U}_L\right]_{2^L}^{2^{L+1}-1}\times_F
\left[\widetilde{S}\right]_{0}^{n-m-2^L}\mod Z^{n-m+1-2^L}.
\end{align} Then
$\overleftarrow{\widetilde{Q}}:=\widetilde{S}+_F\widetilde{T}Z^{2^L}$ is an encryption of $\overleftarrow{Q}$, the reverse
quotient. Finally, we compute
\begin{equation} \widetilde{R}=A-_F \widetilde{Q} \times_F \widetilde{B} \mod Z^m
\end{equation} to get an encryption of the remainder
$R$. Using the fact that $\mathcal{M}_F(2d)\le 2\mathcal{M}_F(d)$,
we can bound the number of
arithmetic operations done with that algorithm with at most
$4\mathcal{M}_F(n-m)+O(n)$. Also, each step of the sequence $(\widetilde{U})$ requires two homomorphic products which means that computing $\widetilde{U_L}$ has depth $2L$. To obtain $\widetilde{R}$, it requires 3 more products, so in total, this algorithm has a depth $2\lceil\log(n-m+1)\rceil+1$.
\subsection{LHE multi-point evaluation}\label{proof:LMEv}
We adapt the algorithm presented in
\cite{DBLP:conf/issac/BostanLS03} to the LHE context. Let
$\widehat{H}=\sum\limits_{i=0}^{m-1}\widehat{h_i}Z^i$, $H \leftarrow \textbf{L.D}_{sk}(\widehat{H})$.
and $y_1$, \dots, $y_m\in\mathbb{M}$.
For the sake of clarity we assume that $m$ is a power of two, but it is
not mandatory in practice. The first step of the algorithm consists in
computing the following polynomials in clear, for $k=0$, \dots,
$\log m$ and $i=1$, \dots, $2^k$:
\begin{equation}\label{eq:defsubpolys}
P_{\left(\frac{i}{2^k}\right)}:=\prod_{j\in\left\lbrace
\frac{i-1}{2^k}m+1,\dots,\frac{i}{2^k}m\right\rbrace}(Z-y_j)
\end{equation}
These polynomials can be computed using a product tree
in $\frac{1}{2}\mathcal{M}(m)\log m+\widetilde{O}(m)$
arithmetic operations. Note that these
polynomials can be precomputed if the evaluation
points are known in advance.

The algorithm requires then to compute
the polynomials
\begin{align}
    B:=& \overleftarrow{P}_{\left(\frac{1}{1}\right)}^{-1}\mod Z^m\text{, and}\\
    \widehat{A}:=& \left[\overleftarrow{B}\ltimes_L \widehat{H}\right]_{m-1}^{2m-1}.
\end{align}
Let $\widehat{A}_{\left(\frac{1}{1}\right)}:=\overleftarrow{\widehat{A}}$. The
last step of the algorithm consists in the computation
for $k=1$, \dots, $\log m$ and $i=1$, \dots, $2^k$
of the encrypted polynomials
\begin{equation}
\widehat{A}_{\left(\frac{i}{2^k}\right)}=\left[\overleftarrow{P}_{\left(\frac{i-(-1)^{(i\bmod
2)}}{2^k}\right)}\ltimes_L\widehat{A}_{\left(\frac{\lceil
i/2\rceil}{2^{k-1}}\right)}\right]_{\frac{m}{2^k}}^{\frac{m}{2^{k-1}}}.
\end{equation}
According to the correctness of the algorithm presented
in \cite{DBLP:conf/issac/BostanLS03}, $\widehat{A}_{\left(\frac{i}{m}\right)}$ is an
encryption of $H(y_i)$ for $1\leq i\le m$. The final computation of the
polynomials $\widehat{A}_{\left(\frac{i}{2^k}\right)}$ requires $\mathcal{M}_L(m)\log m +
\widetilde{O}(m)$ arithmetic operations, and this dominates the cost.
\subsection{FHE multi-point evaluation}\label{proof:FMEv}
Our goal is to evaluate a polynomial $A$ of degree $n$ in $m$ evaluation points $\lbrace y_1,...,y_m\rbrace$ homomorphically. To ease the description of the algorithm, we will assume that $m=2^l$. We will consider the polynomials $P_{\left(\frac{i}{2^k}\right)}$ presented in \eqref{eq:defsubpolys}. If the $\lbrace y_i\rbrace_{i\in[m]}$ are encrypted in a FHE scheme, one can compute those polynomials homomorphically and obtain the following sequence $(\widetilde{P})$:
\begin{equation}
(\widetilde{P})=\left\{\begin{array}{lll}
\widetilde{P_{\left(\frac{i}{m}\right)}}&=Z-_F \widetilde{y_i};&i\in [m]\\
\widetilde{P_{\left(\frac{i}{2^{k-1}}\right)}}&=\widetilde{P_{\left(\frac{2i-1}{2^{k}}\right)}}\times_F\widetilde{P_{\left(\frac{2i}{2^{k}}\right)}};&k\in [l],\ i\in [2^k]
\end{array}\right.
\end{equation}
The computation requires less than
$\frac{1}{2}l\mathcal{M}_F(m)+\widetilde{O}(m)$ arithmetic operations
with a depth $l$. Through an adaptation of the Newton iterations,
given the sequence $(\widetilde{P})$ and the encrypted $\lbrace
y_i\rbrace_{i\in[m]}$, one can compute the following sequence
$(\widetilde{V})$, that gives encryption of the set
$\left\{\overleftarrow{P}_{\left(\frac{i}{2^{l-k}}\right)}^{-1}\mod
  Z^{2^{k}}\right\}_{i\in[2^k]}$, for $k\in[0,l]$, as following:
\begin{equation*}
(\widetilde{V})=\left\{\begin{array}{lll}
\widetilde{V_0}^{\left( i\right)}&=\widetilde{1};&i\in[m]\\
\widetilde{V_1}^{\left( i\right)}&=\widetilde{1}+_F(\widetilde{y_{2i-1}}+\widetilde{y_{2i}})Z;&i\in[2^{l-1}]\\
\widetilde{V_{k+1}^{\left( i\right)}}&= \widetilde{K_0}-Z^{2^k}\left[\left[\widetilde{K_0}\right]_0^{2^k-1}\times_F(\widetilde{K_1}+_F\widetilde{K_2})\right]_0^{2^k-1}\mod Z^{2^{k+1}};&i\in[2^{l-(k+1)}]
\end{array}\right.
\end{equation*}
where, for the computation of $\widetilde{V_{k+1}^{\left( i\right)}}$, we have
\begin{align*}
\widetilde{K_0}&=\widetilde{V_{k}^{\left( 2i-1\right)}}\widetilde{V_{k}^{\left( 2i\right)}},\\
\widetilde{K_1}&=\left[ \overleftarrow{\widetilde{P_{\left(\frac{2i-1}{2^{l-k}}\right)}}}\times_F\widetilde{V_{k}^{\left( 2i-1\right)}}\right]_{2^k}^{2^{k+1}-1},\\
\widetilde{K_2}&=\left[ \overleftarrow{\widetilde{P_{\left(\frac{2i}{2^{l-k}}\right)}}}\times_F\widetilde{V_{k}^{\left( 2i\right)}}\right]_{2^k}^{2^{k+1}-1}.
\end{align*}
We remark that for all $k\in[0,l]$ and $i\in[2^{l-k}]$, $\widetilde{V_k^{\left( i\right)}}$ is an encryption of $\overleftarrow{P}_{\left(\frac{i}{2^{l-k}}\right)}^{-1}\mod Z^{2^k}$. It requires less than $2(l-2)\mathcal{M}_F(m)+\widetilde{O}(m)$ arithmetic operations. Also, this algorithm has a depth $2(l-1)$. With another sequence of Newton iteration, we want to obtain homomorphically $\overleftarrow{P}_{\left(\frac{1}{1}\right)}^{-1}\mod Z^{n-m+1}$ in order to perform the Euclidean division of $A$ by $P_{\left(\frac{1}{1}\right)}=\prod\limits_{i=1}^m (Z-y_i)$. In fact, we are computing the sequence $(\widetilde{U})$ from \eqref{eq:F.Newt}:
\begin{equation*}
(\widetilde{U})=\left \{
\begin{array}{lcl} \widetilde{U_l}=\widetilde{V_{l}^{\left( 1\right)}}\\ \widetilde{U_{k+1}}=\widetilde{U_k}\times_F
\left(\widetilde{1} -_F \left[\overleftarrow{\widetilde{P_{\left(\frac{1}{1}\right)}}}\times_F
\widetilde{U_k}\right]_{2^k}^{2^{k+1}-1}Z^{2^k}\right)\mod Z^{2^{k+1}}
\end{array} \right.
\end{equation*}
As explained in~\cref{proof:modF}, by denoting $L=\lceil\log (n-m+1)\rceil -1$ it requires less than $2\mathcal{M}_F(n-m)+O(n)$ arithmetic operations to obtain homomorphically $\widetilde{U_L}$, with a depth $2(L-l)$. With less than $2\mathcal{M}_F(n-m)+O(n)$ arithmetic operations and a depth $3$, we compute homomorphically the remainder of $A$ divided by $P_{\left(\frac{1}{1}\right)}$, that we denote $\widetilde{R_l}$. We are now applying a multi-point evaluation algorithm different from the one presented in~\cref{proof:LMEv}, which consists in successive Euclidean remainders in the $P_{\left(\frac{i}{2^k}\right)}$. The previously computed $\widetilde{P_{\left(\frac{i}{2^k}\right)}}$ and $\widetilde{V_k^{(i)}}$ will help us to do this algorithm homomorphically through the following sequence. For $k\in[0,l-1]$:
\begin{equation}
(\widetilde{R})=\left\{\begin{array}{lll}
\widetilde{R_l^{(1)}}&=\widetilde{R_l}(\ =:A\textbf{ mod}_F \widetilde{P_{\left(\frac{1}{1}\right)}});&\\
\widetilde{R_k^{(2i-1)}}&=\widetilde{R_{k+1}^{(i)}}-_F \widetilde{P_{\left(\frac{2i-1}{2^{l-k}}\right)}}\overleftarrow{\left[\widetilde{V_k^{(2i-1)}} \overleftarrow{\widetilde{R_{k+1}^{(i)}}}\right]_0^{2^k-1}}\mod Z^{2^k};&i\in[2^{l-k}]\\
\widetilde{R_k^{(2i)}}&=\widetilde{R_{k+1}^{(i)}}-_F \widetilde{P_{\left(\frac{2i}{2^{l-k}}\right)}}\overleftarrow{\left[\widetilde{V_k^{(2i)}} \overleftarrow{\widetilde{R_{k+1}^{(i)}}}\right]_0^{2^k-1}}\mod Z^{2^k};&i\in[2^{l-k}]
\end{array}\right.
\end{equation}
The sequence $(\widetilde{R})$ satisfies the following correctness, assuming $sk_F$ is the decryption key, $\forall k\in[0,l-1],\forall i\in[2^{l-k}]$ :
\begin{align}
\textbf{F.D}_{sk_F}\left(\widetilde{R_{k+1}^{(i)}}\right)&\mod P_{\left(\frac{2i-1}{2^{l-k}}\right)} = \textbf{F.D}_{sk_F}\left(\widetilde{R_{k}^{(2i-1)}}\right),\\
\textbf{F.D}_{sk_F}\left(\widetilde{R_{k+1}^{(i)}}\right)&\mod P_{\left(\frac{2i}{2^{l-k}}\right)} = \textbf{F.D}_{sk_F}\left(\widetilde{R_{k}^{(2i)}}\right).
\end{align}
In particular, we have :
\begin{equation}
\textbf{F.D}_{sk_F}\left(\widetilde{R_0^{(i)}}\right)=A(y_i)
\end{equation}
Finally, computing homomorphically all the $\left\{\widetilde{R_0^{(i)}}\right\}_{i\in[m]}$, given $\widetilde{R_l}$, all the $\widetilde{P_{\left(\frac{i}{2^k}\right)}}$ and all the $\widetilde{V_k^{(i)}}$, requires less than $2l\mathcal{M}_F(m) +\widetilde{O}(m)$ arithmetic operations. Also, this algorithm has a depth $2l$. In total, for $n>m$, less than $4\mathcal{M}_F(n-m)+O(n)$ arithmetic operations and a depth $2(L+l+1)=2(\lceil \log(n-m+1)\rceil + \lceil \log m\rceil)$ are required.

\end{document}